\colorlet{MyBlue}{DodgerBlue!60!Black}
\colorlet{MyGreen}{DarkGreen!85!Black}
\numberwithin{equation}{section}  
\crefname{app}{Appendix}{Appendices}
\newcommand{\R}{\mathbb{R}}
\newcommand{\N}{\mathbb{N}}
\newcommand{\reals}{\mathbb{R}}
\DeclareMathOperator{\Card}{\debug{card}}
\DeclareMathOperator{\diag}{\debug{diag}}
\DeclarePairedDelimiter{\braces}{\{}{\}}
\DeclarePairedDelimiter{\bracks}{[}{]}
\DeclarePairedDelimiter{\parens}{(}{)}
\DeclarePairedDelimiterX{\braket}[2]{\langle}{\rangle}{#1,#2}
\DeclarePairedDelimiterX{\inner}[2]{\langle}{\rangle}{#1,#2}
\DeclarePairedDelimiterX{\setdef}[2]{\{}{\}}{#1:#2}
\DeclarePairedDelimiterXPP{\probof}[1]{\Prob}{(}{)}{}{%

#1}
\DeclarePairedDelimiterXPP{\exof}[1]{\Expect}{[}{]}{}{%

#1}
\newcommand{\debug}[1]{#1}
\theoremstyle{plain}
\newtheorem{theorem}{Theorem}
\newtheorem{corollary}[theorem]{Corollary}
\newtheorem*{corollary*}{Corollary}
\newtheorem{lemma}[theorem]{Lemma}
\newtheorem{proposition}[theorem]{Proposition}
\theoremstyle{definition}
\newtheorem{definition}[theorem]{Definition}
\newtheorem*{definition*}{Definition}
\newtheorem*{hypothesis*}{Hypothesis}
\theoremstyle{remark}
\newtheorem{remark}{Remark}
\newtheorem*{remark*}{Remark}
\newtheorem*{notation*}{Notational remark}
\newtheorem{example}{Example}
\numberwithin{theorem}{section}
\numberwithin{remark}{section}
\numberwithin{example}{section}
\DeclareMathOperator{\Prob}{\mathsf{\debug{P}}}
\DeclareMathOperator{\Expect}{\mathsf{\debug{E}}}
\newcommand{\cost}{\debug c}
\newcommand{\costprof}{\boldsymbol{\cost}}
\newcommand{\Cost}{\debug C}
\newcommand{\eqcost}{\debug \lambda}
\newcommand{\alphaint}{\debug \theta}
\newcommand{\eq}[1]{#1^{\ast}}
\newcommand{\opt}[1]{\widetilde#1}
\newcommand{\activ}[1]{\widehat#1}
\DeclareMathOperator{\SC}{\mathsf{\debug{SC}}}
\DeclareMathOperator{\PoA}{\mathsf{\debug{PoA}}}
\newcommand{\graph}{\mathcal{\debug G}}
\newcommand{\vertices}{\mathcal{\debug V}}
\newcommand{\edges}{\mathcal{\debug E}}
\newcommand{\vertex}{\debug v}
\newcommand{\vertexalt}{\debug w}
\newcommand{\edge}{\debug e}
\newcommand{\source}{\debug O}
\newcommand{\sink}{\debug D}
\newcommand{\inneigh}{\mathcal{\debug{N}^{-}}}
\newcommand{\outneigh}{\mathcal{\debug{N}^{+}}}
\newcommand{\rate}{\debug \mu}
\newcommand{\flow}{\debug f}
\newcommand{\flows}{\mathcal{\debug F}}
\newcommand{\flowprof}{\boldsymbol{\flow}}
\newcommand{\load}{\debug x}
\newcommand{\loads}{\mathcal{\debug X}}
\newcommand{\loadprof}{\boldsymbol{\load}}
\newcommand{\nRoutes}{\debug P}
\newcommand{\routes}{\mathcal{\debug \nRoutes}}
\newcommand{\route}{\debug p}
\newcommand{\routealt}{\route'}
\newcommand{\argdot}{\,\cdot\,}
\newcommand{\card}[1]{\Card\parens{#1}}
\newcommand{\diff}{\ \textup{d}}
\newcommand{\zerovec}{\boldsymbol{\debug 0}}
\newcommand{\coeffa}{\debug a}
\newcommand{\coeffb}{\debug b}
\newcommand{\coeffbprof}{\boldsymbol{\coeffb}}
\newcommand{\coeffalpha}{\debug \alpha}
\newcommand{\coeffbeta}{\debug \beta}
\newcommand{\coeffgamma}{\debug \gamma}
\newcommand{\coeffdelta}{\debug \delta}
\newcommand{\coeffzeta}{\debug \zeta}
\newcommand{\coeffeta}{\debug \eta}
\newcommand{\degr}{\debug d}
\newcommand{\degralt}{\debug q}
\newcommand{\lagrang}{\mathcal{\debug L}}
\newcommand{\perturb}{\debug \varphi}
\newcommand{\valueW}{\debug V}
\newcommand{\vinf}{\debug v}
\newcommand{\nedges}{\debug m}
\newcommand{\npaths}{\debug n}
\newcommand{\eqcostedge}{\debug \tau}
\newcommand{\eqcostedgevec}{\boldsymbol{\eqcostedge}}
\newcommand{\eqcostvertex}{\debug T}
\newcommand{\eqcostvertexvec}{\boldsymbol{\eqcostvertex}}
\newcommand{\epincid}{\debug Z}
\newcommand{\diaga}{\debug \Gamma}
\newcommand{\wvec}{\boldsymbol{\debug w}}
\newcommand{\zvec}{\boldsymbol{\debug z}}
\newcommand{\run}{\debug k}
\newcommand{\irun}{\debug i}
\newcommand{\zvar}{\debug z}
\newcommand{\matrA}{\debug A}
\newcommand{\sumbvec}{ \boldsymbol{\debug d}}
\newcommand{\prim}{\mathsf{\debug P}}
\newcommand{\dual}{\mathsf{\debug D}}
\newcommand{\solset}{\mathsf{\debug S}}
\newcommand{\incrload}{\debug u}
\newcommand{\incrloadvec}{\boldsymbol{\incrload}}
\newcommand{\incrcostedge}{\debug s}
\newcommand{\incrcostvertex}{\debug \delta}
\newcommand{\incrcostvertexvec}{\boldsymbol{\incrcostvertex}}
\newcommand{\Tstrut}{\rule{0pt}{4.0ex}}       
\newcommand{\Bstrut}{\rule[-2.5ex]{0pt}{0pt}} 
\newcommand{\Bstrutsmall}{\rule[-1.5ex]{0pt}{0pt}} 
\DeclareMathOperator{\BPR}{\mathsf{\debug{BPR}}}
\newacro{ACG}{atomic congestion game}
\newacro{ACGSD}{atomic congestion game with stochastic demand}
\newacro{PoA}{price of anarchy}
\newacro{PoS}{price of stability}
\newacro{SC}{social cost}
\newacro{SEC}{social expected cost}
\newacro{SO}{social optimum}
\newacro{SOC}{socially optimum cost}
\newacro{NE}{Nash equilibrium}
\newacro{BNE}{Bayesian Nash equilibrium}
\newacro{PNE}{pure Nash equilibrium}
\newacro{WE}{Wardrop equilibrium}
\newacro{KKT}{Karush\textendash Kuhn\textendash Tucker}
\newacro{OD}[OD]{origin-destination}
\newacro{BPR}{Bureau of Public Roads}
\newacro{SP}{series-parallel}
\begin{document}

\title{The Price of Anarchy in Routing Games as a Function of the Demand}

\author{Roberto Cominetti}
\email{roberto.cominetti@uai.cl}
\address{Facultad de Ingeniería y Ciencias\\Universidad Adolfo Ibáñez\\
Santiago\\CHILE}
\author{Valerio Dose}
\email{vdose@luiss.it}
\address{Dipartimento di Economia e Finanza\\Luiss University\\
Viale Romania 32\\00197 Roma\\ITALY}
\author{Marco Scarsini}
\email{marco.scarsini@luiss.it}
\address{Dipartimento di Economia e Finanza\\Luiss University\\
Viale Romania 32\\00197 Roma\\ITALY}

\subjclass[2020]{Primary 91A14. Secondary 91A43, 90C25, 90C33, 90B06, 90B10}
\keywords{nonatomic routing games, price of anarchy, affine cost functions, variable demand}

\begin{abstract}
The \acl{PoA} has become a standard measure of the efficiency of equilibria in games.
Most of the literature in this area has focused on establishing worst-case bounds for specific classes of games, 
such as routing games or more general congestion games. Recently, the \acl{PoA} in routing games has been 
studied as a function of the traffic demand, providing asymptotic results in light and heavy traffic. 
The aim of this paper is to study the \acl{PoA} in nonatomic routing games in the intermediate region of the demand. 
To achieve this goal, we begin by establishing some smoothness properties of Wardrop equilibria and social optima for general
smooth costs. In the case of affine costs we show that the equilibrium is piecewise linear, with 
break points at the demand levels at which the set of active paths changes. We prove that the number 
of such break points is finite, although it can be exponential in the size of the network. Exploiting a scaling 
law between the equilibrium and the social optimum, we derive a similar behavior for the optimal flows.
We then prove that in any interval between break points
the \acl{PoA} is smooth and  it is either monotone (decreasing or increasing) over the full interval, or it
decreases up to a certain minimum point in the interior of the interval 
and increases afterwards.
We deduce that for affine costs the maximum of the \acl{PoA} can only occur at the break points. 
For general costs we provide counterexamples showing that the set of break points is not always finite.

\end{abstract}

\maketitle

\tableofcontents


\pagebreak

\section{Introduction}
\label{se:intro}

Nonatomic routing games provide a model  for the distribution of traffic over networks with a large number of drivers, 
each one representing a negligible fraction of the total demand. The model is based on a directed graph with one or 
more origin-destination pairs, and the costs are identified with the delays incurred to go from origin to destination.
The delay on an edge  is a nondecreasing function of the load of players on that edge, and the delay of a path is additive over its edges.
The standard solution concept for such nonatomic games is the Wardrop equilibrium, according to which the traffic  in each \ac{OD} pair travels along paths of minimum delay. 
The aggregate social cost experienced by the whole traffic is therefore the product of these minimal delays  multiplied by the corresponding  traffic demands, summed over all \ac{OD} pairs.

Equilibria are known to be inefficient, so that a social planner would be able to reduce the social cost by redirecting flows along the network.
The most common measure of inefficiency is the \acfi{PoA}\acused{PoA}, that is, the ratio of the  social cost at equilibrium over the minimum social cost.
For nonatomic congestion games with affine costs, the value of the \ac{PoA} is bounded above by $4/3$, and this bound is known to be sharp  \citep{RouTar:JACM2002}. 
On the other hand, for a large class of cost functions, including all polynomials, the \ac{PoA} converges to $1$ as the traffic demand goes either to $0$ or to infinity.
In other words, equilibria tend to perfect efficiency both in light and heavy traffic \citep{ColComMerSca:OR2020,WuMohCheXu:OR2021}.

Empirical studies have shown that in real networks, and for intermediate levels of the demand, the \ac{PoA} tends to oscillate and 
often does not reach the worst case bounds. \begin{figure} [ht] 
\subfigure[A simple network with affine costs]  
{  
\begin{tikzpicture}[scale=.8]  
   \node[shape=circle,draw=black,line width=1pt,minimum size=0.5cm] (O) at (-4,0)  { $\source$}; 
   \node[shape=circle,draw=black,line width=1pt,minimum size=0.5cm] (v1) at (0,0)  {$\vertex_{1}$}; 

   \node[shape=circle,draw=black,line width=1pt,minimum size=0.5cm] (v2) at (-2,3)  {$\vertex_{2}$}; 
   \node[shape=circle,draw=black,line width=1pt,minimum size=0.5cm] (v3) at (2,3)  {$\vertex_{3}$}; 
   \node[shape=circle,draw=black,line width=1pt,minimum size=0.5cm] (D) at (4,0)  {$\sink$}; 
    
   \draw[line width=1pt,->] (O) to   node[midway, fill=white] {$\load$} (v1);
   \draw[line width=1pt,->] (O) to   node[midway, fill=white] {$1$} (v2);
   \draw[line width=1pt,->] (v1) to   node[midway, fill=white] {$6$} (D);
   \draw[line width=1pt,->] (v1) to   node[midway, fill=white] {$0$} (v2);
   \draw[line width=1pt,->] (v1) to   node[midway, fill=white] {$2$} (v3);
   \draw[line width=1pt,->] (v2) to   node[midway, fill=white] {$\load$} (v3);
   \draw[line width=1pt,->] (v3) to   node[midway, fill=white] {$\load$} (D);

\end{tikzpicture}
\label{fi:oscillations-a}
}  
\hspace{1cm}
\subfigure[\ac{PoA} as a function of the traffic demand]  
{  
\includegraphics[width=0.4\textwidth]{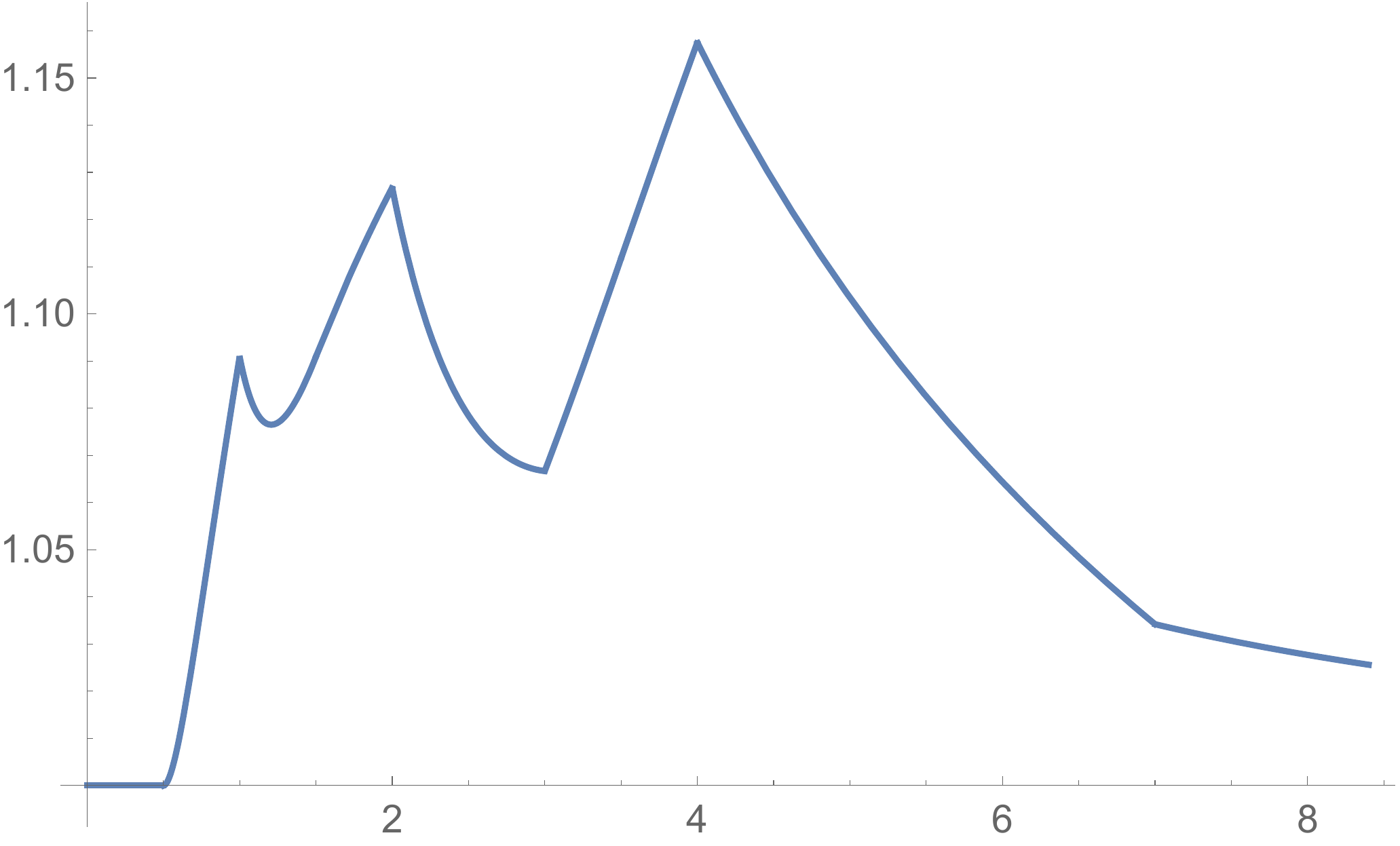}
\label{fi:oscillations-b}
}
\caption{Oscillations of the \acl{PoA}}
\label{fi:oscillations}
\end{figure}
Figure \ref{fi:oscillations} shows a typical profile of the \ac{PoA} as a 
function of the traffic demand. It starts at 1 for low levels of traffic, then it exhibits some oscillations with 
a number of nonsmooth spikes, and eventually it decreases smoothly back to 1 in the highly congested regime.
The shape and number of these oscillations and spikes is the object of this paper.


\subsection{Our contribution}

We consider nonatomic routing games over a network with a single \ac{OD},
and we study the behavior of the \acl{PoA} as a function of the traffic demand.

To achieve our goal, we need some general results on the continuity and monotonicity of the equilibrium costs and flows. 
We resort to the classical result in \citet{BecMcGWin:Yale1956} according to which a Wardrop equilibrium 
is a solution of a convex optimization program. We show that the optimal value of this program is convex 
and continuously differentiable as a function of the demand, and its derivative is precisely the equilibrium cost.  
A similar result is established for the minimum social cost.

When the costs have a strictly positive derivative, we show that the equilibrium loads and the \ac{PoA} are  in fact 
$C^1$ at each demand level that is \emph{regular}, in the sense that all the optimal paths carry a strictly positive flow. 
This regularity fails in particular at the so-called $\activ{\edges}$-breakpoints, which are demand levels at which the set of  
shortest paths at equilibrium changes. 

For affine costs we bypass regularity and we show directly that the equilibrium cost is piecewise linear and differentiable except at $\activ{\edges}$-breakpoints. 
The crucial property is that if the set of shortest paths is the same at two demand levels, then this set remains optimal in between. 
From here it follows that the number of $\activ{\edges}$-breakpoints is finite, though it can be exponentially large.  
It also follows that between $\activ{\edges}$-breakpoints the \ac{PoA} is differentiable and either is monotone, or is first decreasing and then increasing, so it has a unique minimum in the interior of the interval.
From this we conclude that the maximum of the \ac{PoA} is attained at an $\activ{\edges}$-breakpoint.

We finally present several examples showing how these properties might fail for general costs.


\subsection{Related Work} 

The standard solution concept in nonatomic routing games is due to \citet{War:PICE1952}.
Its mathematical properties were first studied  by \citet{BecMcGWin:Yale1956};  early algorithms for computing equilibria were proposed by \citet{Tom:OR1966} for affine costs, and by \citet{DafSpa:JRNBSB1969} for general convex 
costs. For historical surveys on the topic we refer to \citet{FloHea:HORMS1995} as well as \citet{CorSti:EORMS2011}.

Several papers have considered the sensitivity of the equilibrium flows and costs with respect to variations of the traffic demand.
\citet{Hal:TS1978} showed that an increase in the demand of one \ac{OD} pair always increases the equilibrium cost for that pair, 
whereas \citet{Fis:TRB1979} observed that the cost on a different \ac{OD} could be reduced.
These questions were further explored by \citet{DafNag:MP1984}. 
In a different direction, \citet{Pat:TS2004} characterized the existence of directional derivatives for the equilibrium.
\citet{JosPat:TRB2007} proved that equilibrium costs are always directionally differentiable, whereas 
equilibrium edge loads not always are.
\citet{EngFraOlb:TCS2010} showed that there exist single \ac{OD} network games where an $\varepsilon$-increase in the traffic demand produces a global migration of traffic from one set of equilibrium paths to a disjoint set of paths, but, nevertheless, the load on each edge changes at most by $\varepsilon$.
Moreover, if the cost functions are polynomials of degree at most $\degr$, then the equilibrium costs increase at most of a multiplicative factor $(1+\varepsilon)^{\degr}$.
\citet{TakKwo:OL2020} extended this last result to games with multiple \ac{OD} pairs.

The recognition that selfish behavior produces social inefficiency goes back at least to \citet{Pig:Macmillan1920}. 
A measure to quantify this inefficiency was proposed by \citet{KouPap:STACS1999}, 
by considering  the ratio of the social cost of the worst equilibrium over the optimum social cost.
It was termed \acl{PoA} by \citet{Pap:ACMSTC2001}.
A \ac{PoA} close to one indicates efficiency of the equilibria of the game, whereas a high \ac{PoA} implies that, in the worst scenario, strategic behavior can lead to significant social inefficiency. 
Most of the subsequent literature established sharp bounds for the \ac{PoA} for specific classes of games,
notably for congestion games and in particular for routing games.
\citet{RouTar:JACM2002} showed that in every nonatomic congestion game with affine costs the \ac{PoA} is bounded above by $4/3$. 
Moreover, this bound is sharp and attained in a traffic game with a simple two-edge parallel network.
\citet{Rou:JCSS2003} generalized this result to polynomial functions of maximum degree $\degr$  showing that the \ac{PoA} grows as $\Theta(\degr/\log\degr)$.
\citet{DumGai:INE2006} refined the result when the cost functions are sums of monomials whose degrees are between $\degralt$ and $\degr$.
\citet{RouTar:GEB2004} extended the analysis to all differentiable cost function $\cost$
such that $\load\cost(\load)$ is convex.
Less regular costs and different optimizing criteria for the social cost were studied by \citet{CorSchSti:GEB2008,CorSchSti:MOR2004,CorSchSti:OR2007}.

Some papers took a more applied view and studied the actual value of the \ac{PoA} in real networks. 
\citet{YouGasJeo:PRL2008,YouGasJeo:PRL2009} dealt with traffic in Boston, London, and New York, 
and noted that the \ac{PoA} exhibits a similar pattern in the three cities: it is $1$ when traffic is light, it oscillates in the central region and then goes back to $1$ when traffic increases.
A similar behavior was observed by \citet{OHaConWat:TRB2016}, who---taking an approach that relates to the one in this paper---showed how an expansion and/or retraction of the routes used at equilibrium affects the behavior of the \ac{PoA}. 
Experimentally, \citet{MonBenPil:WINE2018} studied the commuting behavior of a large number of Singaporean students 
and concluded that the \ac{PoA} is overall low and far from the worst case scenarios. 

An analytical justification for the asymptotic efficiency of the \ac{PoA} in light and heavy traffic was presented in \citet{ColComSca:TOCS2019,ColComMerSca:OR2020,WuMohCheXu:OR2021}.  
\citet{ColComSca:TOCS2019} considered the case of single \ac{OD} parallel networks and proved that, in heavy traffic, the \ac{PoA} converges to one when the cost functions are regularly varying.
Their results were extended in various directions in \citet{ColComMerSca:OR2020}, considering general networks and analyzing both the light and heavy traffic asymptotics. 
A different technique, called \emph{scalability}, was used by \citet{WuMohCheXu:OR2021} to study the case of heavy traffic. 

\citet{WuMoh:arXiv2020} considered issues that are quite close to the one examined here. 
They defined a metric on the space of nonatomic congestion games that share the same network, commodities, and strategies, but differ in terms of demands and cost functions. 
Using this metric they showed that the \ac{PoA} is a continuous function of both the demand and the costs.
Then they performed a sensitivity analysis of the \ac{PoA} with respect to variations of the game in terms of this metric.

In a very interesting recent paper \citet{KliWar:MORfrth} \citep[see also the conference version][]{KliWar:SODA2019}
consider nonatomic routing games with piecewise linear costs and present algorithms to track the full path 
of \aclp{WE}  when the demands vary proportionally along a fixed direction. 
These algorithms are based on (positive or negative) electrical flows on undirected graphs and are then suitably adapted to positive flows on directed graphs.
The connection between their paper and  ours will be discussed in \cref{se:affine}.

The behavior of the \ac{PoA} as a function of a different parameter was studied by \citet{ComScaSchSti:EC2019}.
In that case the parameter of interest was the probability that players actually take part in the game. 
\citet{ColKliSca:ICALP2018} studied the possibility of achieving efficiency in routing games via the use of tolls, when the demand can vary,
while \citet{GemKouMonPapPil:ISTACS2019} analyzed the income inequality effects of reducing the \ac{PoA} via tolls.


\subsection{Organization of the paper}

In \cref{se:model} we recall the model of nonatomic routing games, and we set the notations
and the standing assumptions. \cref{sec:prelim} investigates the smoothness of the equilibrium costs and of the 
\ac{PoA} as a function of the demand, for general nondecreasing smooth costs. 
The behavior of the \ac{PoA} for  affine costs is studied  in \cref{se:affine}. 
\cref{se:examples} presents various examples.


\section{The nonatomic congestion model}
\label{se:model}

We consider a  nonatomic routing game with a single origin-destination pair. 
The network is described by a directed multigraph $\graph=(\vertices,\edges)$ with vertex set $\vertices$, edge set $\edges$, an origin $\source\in \vertices$, and a destination $\sink\in \vertices$. 
The traffic demand is given by a positive real number $\rate> 0$, interpreted as vehicles per hour, which has to be routed along a set $\routes$ of simple paths from $\source$ to $\sink$. 
The nonatomic hypothesis means that each vehicle controls a negligible fraction of the total traffic, and consequently the traffic flows are treated as continuous variables. 

The  traffic flow on path $\route$ is denoted by $\flow_{\route}$ and the set of \emph{feasible flow profiles} is 
\begin{equation}
\label{eq:flows}
\flows_{\rate}=\braces*{\flowprof=(\flow_{\route})_{\route\in\routes} \colon \flow_{\route}\ge 0 \text{ and }\sum_{\route\in\routes} \flow_{\route}=\rate}.
\end{equation}
Each flow profile $\flowprof\in \flows_{\rate}$ induces a \emph{load profile} $\loadprof=(\load_{\edge})_{\edge\in\edges}$ with $\load_{\edge}=\sum_{\route\ni\edge} \flow_{\route}$ 
representing the aggregate traffic over the edge $\edge$. 
We call $\loads_{\rate}$ the set of all such  load profiles.
Note that different flow profiles may induce the same  edge loads  so this correspondence is not  bijective.

Every edge $\edge\in\edges$ has a continuous nondecreasing \emph{cost function} $\cost_{\edge} \colon [0,+\infty)\to [0,+\infty)$, where
$\cost_{\edge}(\load_{\edge})$ represents the travel time
(or unit cost) of traversing the edge when the load is $\load_{\edge}$. 
When the traffic is distributed according to $\flowprof=(\flow_{\route})_{\route\in\routes}$ 
with induced load profile $\loadprof=(\load_{\edge})_{\edge\in\edges}$, the cost experienced by traveling on a path $\route\in\routes$
is given by 
\begin{equation}
\label{eq:cost-path}
\cost_{\route}(\flowprof)
\coloneqq\sum_{\edge\in\route} \cost_{\edge}\parens*{\sum_{\routealt\ni\edge}\flow_{\routealt}}
=\sum_{\edge\in\route} \cost_{\edge}(\load_{\edge}).
\end{equation}
With a slight abuse of notation, we use the same symbol $\cost$ for the cost function over paths and over edges. 
The meaning should be clear from the context.

\subsection{Wardrop equilibrium}
\label{suse:Wardrop}

A feasible flow profile  is called a \emph{Wardrop equilibrium} if the paths that are actually used have minimum cost. 
Formally, $\eq{\flowprof}\in \flows_{\rate}$ is an equilibrium iff there exists  $\eqcost\in\R$ such that
\begin{equation}\label{eq:Wardrop}
\begin{cases}
\cost_{\route}(\eq{\flowprof})=\eqcost & \text{for each }\route\in\routes\text{ such that }\eq{\flow}_{\route}>0,\\
\cost_{\route}(\eq{\flowprof})\ge\eqcost & \text{for each }\route\in\routes\text{ such that }\eq{\flow}_{\route}=0.
\end{cases}
\end{equation}
The quantity $\eqcost$ is called the \emph{equilibrium cost}, and is a function of $\rate$.

As noted by \citet{BecMcGWin:Yale1956}, Wardrop equilibria coincide with the optimal solutions of the convex minimization problem
\begin{equation}
\label{eq:Wardrop-variational}
\valueW(\rate)\coloneqq \min_{\flowprof\in \flows_{\rate}} \sum_{\edge\in\edges} \Cost_{\edge}\parens*{\sum_{\routealt\ni\edge}\flow_{\routealt}}
=\min_{\load\in \loads_{\rate}}\sum_{\edge\in\edges}\Cost_{\edge}(\load_{\edge}),
\end{equation}
where $\Cost_{\edge}(\argdot)$ is the primitive of the edge cost $\cost_{\edge}(\argdot)$, that is,
\begin{equation}
\label{eq:Cost}
\Cost_{\edge}(\load_{\edge})=\int_{0}^{\load_{\edge}} \!\!\cost_{\edge}(\zvar)\diff \zvar.
\end{equation}
This follows by noting that \eqref{eq:Wardrop} are the optimality conditions for $\valueW(\rate)$, with the equilibrium cost $\eqcost$ playing the role of a Lagrange multiplier for the constraint $\sum_{\route} \flow_{\route}=\rate$.
It follows that, for each fixed demand $\rate$ an equilibrium flow $\eq{\flowprof}$ always exists.  

Although Wardrop equilibria are not always unique, all of them induce the same edge costs
$\eqcostedge_{\edge}(\rate)\coloneqq\cost_{\edge}(\eq{\load}_{\edge})$. 
In particular they have the same equilibrium cost $\eqcost=\eqcost(\rate)$, which is simply the shortest $\source$-$\sink$ distance:
\begin{equation}\label{eq:lambda-min}
\eqcost(\rate)=\min_{\route\in\routes}\sum_{\edge\in\route}\eqcostedge_{\edge}(\rate).
\end{equation}
As a matter of fact, as proved by \citet{Fuk:TRB1984}, the equilibrium edge costs $\eqcostedge_{\edge}=\eqcostedge_{\edge}(\rate)$ 
are the unique optimal solution of the strictly convex dual program
\begin{equation}\label{eq:lambda-dual}
\min_{\eqcostedge}\sum_{\edge\in\edges}\Cost_\edge^*(\eqcostedge_{\edge}) - \rate\,\min_{\route\in\routes}\sum_{\edge\in\route}\eqcostedge_{\edge}
\end{equation}
where $\Cost_\edge^*(\argdot)$ is the Fenchel conjugate of $\Cost_\edge(\argdot)$, which is strictly convex.

\subsection{Social optimum and efficiency of equilibria}
The total cost experienced by all users traveling across the network is called the \emph{social cost} and is denoted by
\begin{equation}\label{eq:SC}
\SC(\flowprof)\coloneqq \sum_{\route\in\routes}\flow_{\route}\, \cost_{\route}(\flowprof)=\sum_{\edge\in\edges}\load_{\edge}\, \cost_{\edge}(\load_{\edge}).
\end{equation}
Since in equilibrium all the paths that carry flow have the same cost $\eqcost(\rate)$, it follows that all equilibria have the same social cost, that is,
\begin{equation}\label{eq:same-cost}
\SC(\eq{\flowprof})=\sum_{\route\in\routes}\eq{\flow}_{\route}\,\cost_{\route}(\eq{\flowprof})
=\sum_{\route\in\routes}\eq{\flow}_{\route}\,\eqcost(\rate) = \rate\,\eqcost (\rate).
\end{equation}

A feasible flow $\opt{\flowprof}\in \flows_{\rate}$ is called an \emph{optimum flow} if it minimizes the social cost, that is, $\opt{\flowprof}$ is an optimal solution of
\begin{equation}
\label{eq:opt-min}
\opt{\valueW}(\rate)\coloneqq \min_{\flowprof\in \flows_{\rate}} \SC(\flowprof)
=\min_{\flowprof\in \flows_{\rate}} \sum_{\edge\in\edges}\opt{\Cost}_{\edge}\parens*{\sum_{\routealt\ni\edge}\flow_{\routealt}},
\end{equation}
where $\opt{\Cost}_{\edge}(\load_{\edge})=\load_{\edge}\, \cost_{\edge}(\load_{\edge})$. 
The \acfi{PoA}\acused{PoA} is then defined 
as the ratio between the social cost at equilibrium $ \rate\,\eqcost(\rate)$ and the minimum social cost $\opt{\valueW}(\rate)$: 
\begin{equation}\label{eq:PoA}
\PoA(\rate)= \frac{ \rate\,\eqcost(\rate)}{\opt{\valueW}(\rate)}.
\end{equation}

Our main goal is to investigate the smoothness of the function $\rate\mapsto\PoA(\rate)$ and to understand  
the kinks and monotonicity properties observed in the example of \cref{fi:oscillations-b}. To this end, \cref{sec:prelim} presents
some preliminary results on the differentiability of equilibria as a function of the demand $\rate$.
More precise results will be discussed in \cref{se:affine} for the case with affine costs.

\section{Differentiability of equilibria and \acl{PoA}}
\label{sec:prelim}
 In order to study the smoothness of the PoA, we begin by establishing some preliminary facts on the differentiability of the 
 optimal value function $\rate\mapsto \valueW(\rate)$ 
and the smoothness of the equilibrium loads $\rate\mapsto\eq{\load}_{\edge}(\rate)$. These results
follow from general convex duality and sensitivity analysis of parametric optimization problems.
The following property does not seem to have been stated earlier in the literature, at least in this generality.

\begin{proposition} 
\label{pr:V-C1} 
The map $\rate\mapsto \valueW(\rate)$ is convex and $C^1$ on $(0,\infty)$ with $\valueW'(\rate)=\eqcost(\rate)$ 
continuous and nondecreasing.
Moreover, the equilibrium costs $\eqcostedge_{\edge}(\rate)$ are uniquely defined and continuous.
\end{proposition}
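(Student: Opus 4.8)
The plan is to treat $\valueW$ as the value function of the parametric convex program \eqref{eq:Wardrop-variational} and to extract all four assertions from its convexity together with a one-sided derivative computation; continuity of the edge costs will then follow from a compactness argument that feeds on the uniqueness already available at each fixed demand.

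First I would record that the objective $F(\flowprof)\coloneqq\sum_{\edge\in\edges}\Cost_{\edge}\parens*{\sum_{\routealt\ni\edge}\flow_{\routealt}}$ is convex, since each primitive $\Cost_{\edge}$ is convex (its derivative $\cost_{\edge}$ is nondecreasing) and is precomposed with a linear map. Convexity of $\valueW$ is then immediate: given $\rate_{0},\rate_{1}>0$, $t\in[0,1]$ and optimal flows $\flowprof_{0}\in\flows_{\rate_{0}}$, $\flowprof_{1}\in\flows_{\rate_{1}}$, the combination $t\flowprof_{0}+(1-t)\flowprof_{1}$ lies in $\flows_{t\rate_{0}+(1-t)\rate_{1}}$ because the defining constraints in \eqref{eq:flows} are linear, so $\valueW(t\rate_{0}+(1-t)\rate_{1})\le F(t\flowprof_{0}+(1-t)\flowprof_{1})\le t\valueW(\rate_{0})+(1-t)\valueW(\rate_{1})$.

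Next I would identify the derivative with the equilibrium cost by sandwiching the one-sided derivatives. Fix $\rate>0$, take an equilibrium $\eq{\flowprof}$ with cost $\eqcost=\eqcost(\rate)$, and pick a path $\route^{*}$ with $\eq{\flow}_{\route^{*}}>0$, so that $\cost_{\route^{*}}(\eq{\flowprof})=\eqcost$ by \eqref{eq:Wardrop}. For small $h>0$, routing $h$ extra (resp.\ $h$ fewer) units along $\route^{*}$ produces a feasible flow for demand $\rate+h$ (resp.\ $\rate-h$), and since $\Cost_{\edge}'=\cost_{\edge}$ and only the edges of $\route^{*}$ change load, the objective moves with derivative $\sum_{\edge\in\route^{*}}\cost_{\edge}(\eq{\load}_{\edge})=\eqcost$ at $h=0$. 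Comparing with $\valueW$ yields $\valueW'_{+}(\rate)\le\eqcost\le\valueW'_{-}(\rate)$, whereas convexity gives $\valueW'_{-}(\rate)\le\valueW'_{+}(\rate)$; hence all three coincide, $\valueW$ is differentiable with $\valueW'(\rate)=\eqcost(\rate)$, and this simultaneously shows that $\eqcost(\rate)$ is uniquely determined. As the derivative of a convex function, $\valueW'=\eqcost$ is automatically nondecreasing and continuous, so $\valueW\in C^{1}(0,\infty)$.

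Finally I would prove continuity of the equilibrium edge costs. Uniqueness of $\eqcostedge_{\edge}(\rate)$ is the classical fact recalled in \cref{se:model}, equivalently that $\eqcostedge(\rate)$ is the unique solution of the strictly convex dual \eqref{eq:lambda-dual}. For continuity, let $\rate_{n}\to\rate$ in $(0,\infty)$ and choose equilibrium loads $\eq{\load}(\rate_{n})$; since $0\le\eq{\load}_{\edge}(\rate_{n})\le\rate_{n}$ they are bounded, so a subsequence converges to some $\bar{\load}$, which is feasible for $\rate$. By continuity of the $\Cost_{\edge}$ and of $\valueW$ (just established), $\sum_{\edge}\Cost_{\edge}(\bar{\load}_{\edge})=\lim_{k}\valueW(\rate_{n_{k}})=\valueW(\rate)$, so $\bar{\load}$ is optimal, i.e.\ an equilibrium load for $\rate$, and uniqueness of the edge costs gives $\cost_{\edge}(\bar{\load}_{\edge})=\eqcostedge_{\edge}(\rate)$. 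Thus every subsequence of $\eqcostedge_{\edge}(\rate_{n})=\cost_{\edge}(\eq{\load}_{\edge}(\rate_{n}))$ has a further subsequence converging to $\eqcostedge_{\edge}(\rate)$, whence the whole sequence converges and $\eqcostedge_{\edge}$ is continuous. The main obstacle is exactly this last step: because the $\cost_{\edge}$ may be locally constant the equilibrium loads need not be unique, so I cannot argue continuity of the loads directly and must route the limit through the value function and appeal to uniqueness of the costs; the convexity and derivative steps, by contrast, are elementary once the perturbation is taken along a used shortest path.
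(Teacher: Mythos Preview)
Your proof is correct but follows a different and more elementary route than the paper. The paper proceeds through abstract convex duality: it introduces a perturbation function $\perturb_{\rate}$, invokes the theorem that the subdifferential $\partial\valueW(\rate)$ coincides with the dual optimal set, and then shows this dual has the unique solution $\eqcost(\rate)$; continuity of the $\eqcostedge_{\edge}(\rate)$ is obtained by applying Berge's maximum theorem to Fukushima's strictly convex dual \eqref{eq:lambda-dual}. You instead establish convexity by a direct convex combination of feasible flows, pin down the derivative by perturbing an equilibrium along a used shortest path and sandwiching the one-sided derivatives $\valueW'_{+}(\rate)\le\eqcost\le\valueW'_{-}(\rate)$ against the convexity inequality $\valueW'_{-}\le\valueW'_{+}$, and handle continuity of the edge costs by a sequential compactness argument that passes to a limit load and then exploits uniqueness of the equilibrium edge costs at the limit demand. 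Your approach is self-contained and avoids both Rockafellar-style duality and Berge's theorem, and the shadow-price computation is arguably more transparent; the paper's duality framework, on the other hand, situates the result within standard parametric optimization and would extend more readily (for example to multiple origin-destination pairs, where the equilibrium cost becomes a vector of multipliers).
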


\begin{proof}
This is a consequence of the convex duality theorem. 
Indeed, consider the 
perturbation function $\perturb_{\rate}: \R^{|\routes|}\times \R\to\R\cup\{+\infty\}$, given by
\begin{equation}\label{eq:perturb}
\perturb_{\rate}(\flowprof,\zvar)
=
\begin{cases}
\sum_{\edge\in\edges}\cost_{\edge}\parens*{\sum_{\routealt\ni\edge}\flow_{\routealt}} & \text{if } \flowprof\geq \zerovec,\ \sum_{\route}\flow_{\route}=\rate+\zvar\\
+\infty& \text{otherwise}.
\end{cases}
\end{equation}
Clearly $\perturb_{\rate}$ is a proper closed convex function \citep[page~24]{Roc:PUP1997}.
Calling 
\begin{equation}\label{eq:v-m}
\vinf_{\rate}(\zvar)=\inf_{\flowprof}\perturb_{\rate}(\flowprof,\zvar),
\end{equation}
we have
$\valueW(\rate+\zvar)=\vinf_{\rate}(\zvar)$ and in particular $\valueW(\rate)=\vinf_{\rate}(0)$ which we consider as the primal problem $(\prim_{\rate})$.
From general convex duality, we have that $\zvar\mapsto \vinf_{\rate}(\zvar)=\valueW(\rate+\zvar)$ is a convex function,  from which we deduce that $\valueW(\argdot)$ is convex.
Moreover, the perturbation function $\perturb_{\rate}$ yields a corresponding dual 
\begin{equation}\label{eq:dual}
(\dual_{\rate})\quad\min_{\eqcost\in\R}\perturb_{\rate}^{*}(\zerovec,\eqcost),
\end{equation}
where $\perturb_{\rate}^{*}$ is the Fenchel conjugate function, that is,
\begin{align*}
\perturb_{\rate}^{*}(\zerovec,\eqcost)
&=\sup_{\flowprof,\zvar}~\langle \zerovec,\flowprof\rangle+\eqcost \zvar-\perturb_{\rate}(\flowprof,\zvar)\\
&=\sup_{\flowprof\geq \zerovec}\eqcost\parens*{\sum_{\route} \flow_{\route}-\rate}-\sum_{\edge\in\edges}\Cost_{\edge}\parens*{\sum_{\routealt\ni\edge}\flow_{\routealt}}.
\end{align*}

Since $\valueW(\rate')$ is finite for all $\rate'\in (0,\infty)$, it follows that $\vinf_{\rate}(\zvar)=\valueW(\rate+\zvar)$ is finite for $\zvar$ in  some interval around $0$,
and then the convex duality theorem implies that there is no duality gap and the subdifferential $\partial \vinf_{\rate}(0)$ at $\zvar=0$ coincides with the optimal solution 
set $\solset(\dual_{\rate})$ of the dual problem, that is, $\partial \valueW(\rate)=\partial \vinf_{\rate}(0)=\solset(\dual_{\rate})$. 

We claim that the dual problem has a unique solution, which is exactly the equilibrium cost $\eqcost(\rate)$. 
Indeed,  fix an optimal solution $\eq{\flowprof}$ for $\vinf_{\rate}(0)
=\valueW(\rate)$ and recall that this is just a Wardrop equilibrium. 
The dual optimal solutions  are precisely the $\eqcost$'s such that 
\begin{equation*}
\perturb_{\rate}(\eq{\flowprof},0)+\perturb_{\rate}^{*}(\zerovec,\eqcost)=0.
\end{equation*}
This equation can be written explicitly as
\begin{equation*}
\sum_{\edge\in\edges}\Cost_{\edge}\parens*{\sum_{\route\ni\edge}\eq{\flow}_{\route}}+\sup_{\flowprof\geq \zerovec}~\eqcost\parens*{\sum_{\route} \flow_{\route}-\rate}-\sum_{\edge\in\edges}\Cost_{\edge}\parens*{\sum_{\routealt\ni\edge}\flow_{\routealt}}=0,
\end{equation*}
from which it follows that $\flowprof=\eq{\flowprof}$ is an optimal solution in the  latter supremum. The corresponding optimality conditions are
\begin{align*}
\eqcost-\sum_{\edge\in\route}\cost_{\edge}\parens*{\sum_{\routealt\ni\edge}
\eq{\flow}_{\routealt}}=0,&\quad\text{ if }\eq{\flow}_{\route}>0,\\
\eqcost-\sum_{\edge\in\route}\cost_{\edge}\parens*{\sum_{\routealt\ni\edge}
\eq{\flow}_{\routealt}}\leq 0,&\quad\text{ if }\eq{\flow}_{\route}=0,
\end{align*}
which imply that $\eqcost$ is the equilibrium cost for the Wardrop equilibrium, that is, $\eqcost=\eqcost(\rate)$. It follows that $\partial \valueW(\rate)=\{\eqcost(\rate)\}$ so that $\rate\mapsto \valueW(\rate)$ is not only
convex but also differentiable with $\valueW'(\rate)=\eqcost(\rate)$. The conclusion follows by noting that every convex differentiable 
function is automatically of class $C^{1}$, with $\valueW'(\rate)$ nondecreasing.

The continuity of the equilibrium edge costs $\eqcostedge_{\edge}=\eqcostedge_{\edge}(\rate)$ is a consequence of Berge's maximum theorem \citep[see, e.g.,][Section~17.5]{AliBor:Springer2006}.
Indeed, the equilibrium edge costs are optimal solutions for the dual program in \cref{eq:lambda-dual}. 
Since the objective function is jointly continuous in $(\eqcostedge,\rate)$, Berge's theorem implies that the optimal solution correspondence is upper-semicontinous. However, in this case the optimal solution is unique, so that the optimal correspondence is single-valued, and, as a consequence, the equilibrium edge costs $\eqcostedge_{\edge}(\rate)$ are continuous. 
\end{proof}

As an immediate consequence of \cref{pr:V-C1} we obtain the following result:

\begin{corollary} 
\label{cor:V-C1} 
If the costs $\cost_{\edge}(\argdot)$ are strictly increasing and continuous,  the equilibrium loads $\eq{\load}_{\edge}(\rate)=\cost_{\edge}^{-1}(\eqcostedge_{\edge}(\rate))$ 
are unique and continuous.
\end{corollary}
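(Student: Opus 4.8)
The plan is to read off the result from the two properties already secured in \cref{pr:V-C1}, namely that the equilibrium edge cost $\eqcostedge_{\edge}(\rate)$ is uniquely defined and depends continuously on $\rate$. The additional hypothesis of strict monotonicity serves only to invert the defining relation between the equilibrium edge cost and the equilibrium load, so the corollary really is immediate once \cref{pr:V-C1} is in hand.

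First I would record the defining identity $\eqcostedge_{\edge}(\rate)=\cost_{\edge}(\eq{\load}_{\edge})$, valid for the edge load $\eq{\load}_{\edge}$ of any Wardrop equilibrium at demand $\rate$. Since $\cost_{\edge}$ is continuous and strictly increasing on $[0,\infty)$, it is a bijection onto its range and its inverse $\cost_{\edge}^{-1}$ is well-defined there; note that $\eqcostedge_{\edge}(\rate)$ always lies in this range, being by construction the image under $\cost_{\edge}$ of a genuine nonnegative load. Applying $\cost_{\edge}^{-1}$ to the identity then yields $\eq{\load}_{\edge}=\cost_{\edge}^{-1}(\eqcostedge_{\edge}(\rate))$. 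Because the right-hand side depends only on the quantity $\eqcostedge_{\edge}(\rate)$, which \cref{pr:V-C1} guarantees to be uniquely determined by $\rate$, the equilibrium load on edge $\edge$ is itself unique.

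For continuity I would invoke the elementary fact that a continuous strictly increasing function on an interval has a continuous inverse on its image; hence $\cost_{\edge}^{-1}$ is continuous. Composing it with the continuous map $\rate\mapsto\eqcostedge_{\edge}(\rate)$ from \cref{pr:V-C1} shows that $\rate\mapsto\eq{\load}_{\edge}(\rate)=\cost_{\edge}^{-1}(\eqcostedge_{\edge}(\rate))$ is continuous, as claimed. There is no serious obstacle here, since the substantive analytic work---uniqueness and continuity of the equilibrium edge costs---was already carried out in \cref{pr:V-C1}; the only point demanding a little care is checking that $\eqcostedge_{\edge}(\rate)$ remains in the domain of $\cost_{\edge}^{-1}$, which holds automatically because it is realized by an actual equilibrium load.
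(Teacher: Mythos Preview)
Your proof is correct and matches the paper's approach exactly: the paper presents this corollary as an immediate consequence of \cref{pr:V-C1} without further argument, and your write-up simply spells out the (trivial) details of inverting the strictly increasing $\cost_{\edge}$ and composing continuous maps.
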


For multiple origin-destination networks with continuous and strictly increasing costs, the continuity of 
the \emph{equilibrium loads} $\eq{\load}_\edge$  as a function of the demands was already proved in \citet[Theorem 1]{Hal:TS1978}. 
On the other hand, \citet[Theorem 2]{Hal:TS1978} proved the continuity of the \emph{equilibrium costs} provided that all paths carry a strictly positive flow, 
while \citet[Theorem 3]{Hal:TS1978} showed that the equilibrium cost of each \ac{OD} increases with the corresponding demand. 
As shown  in \cref{pr:V-C1}, for a single origin-destination the continuity and monotonicity of the equilibrium cost requires neither that costs be strictly increasing  nor that all paths carry a strictly positive flow.
Although this might be considered a minor improvement, allowing for nondecreasing costs and particularly constant costs is a 
convenient extension. 
 
For the analysis of the \ac{PoA}, the most relevant part of \cref{pr:V-C1} is the smoothness of $\valueW(\argdot)$ and the characterization of its derivative $\valueW'(\rate)=\eqcost(\rate)$.
In particular, considering the social optimum problem \eqref{eq:opt-min} we get the following direct consequence:
\begin{proposition}\label{pr:PoA-differentiable}
Let the costs $\cost_{\edge}(\argdot)$ be $C^{1}$ and nondecreasing with $\load_{\edge}\mapsto \load_{\edge}\,\cost_{\edge}(\load_{\edge})$ convex. 
Then the optimal social cost $\rate\mapsto\opt{\valueW}(\rate)$ is convex and $C^{1}$.  
Moreover, $\rate\mapsto\PoA(\rate)$ 
is continuous in $(0,+\infty)$ and differentiable at every $\rate$ where the equilibrium cost $\eqcost(\rate)$ is differentiable. 
\end{proposition}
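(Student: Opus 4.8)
The plan is to recognize the social optimum problem \eqref{eq:opt-min} as a Wardrop-type program for a modified set of edge costs, and then to read off the first assertion directly from \cref{pr:V-C1}. Concretely, I would set
\[
\opt{\cost}_{\edge}(\load)\coloneqq \cost_{\edge}(\load)+\load\,\cost_{\edge}'(\load)=\parens[\big]{\load\,\cost_{\edge}(\load)}',
\]
so that $\opt{\Cost}_{\edge}(\load)=\load\,\cost_{\edge}(\load)$ is precisely the primitive of $\opt{\cost}_{\edge}$ with $\opt{\Cost}_{\edge}(0)=0$. With this identification, \eqref{eq:opt-min} has exactly the form of the Wardrop program \eqref{eq:Wardrop-variational}, with $\cost_{\edge}$ replaced by $\opt{\cost}_{\edge}$ and $\Cost_{\edge}$ replaced by $\opt{\Cost}_{\edge}$.

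First I would check that the modified costs $\opt{\cost}_{\edge}$ satisfy the hypotheses needed to invoke \cref{pr:V-C1}: they are continuous because $\cost_{\edge}\in C^{1}$; they are nonnegative because $\cost_{\edge}\ge 0$, $\cost_{\edge}'\ge 0$ (as $\cost_{\edge}$ is nondecreasing) and $\load\ge 0$; and they are nondecreasing precisely because $\load\mapsto \load\,\cost_{\edge}(\load)=\opt{\Cost}_{\edge}(\load)$ is convex, which is the standing assumption. Hence \cref{pr:V-C1} applies verbatim to the modified game and yields that $\rate\mapsto\opt{\valueW}(\rate)$ is convex and $C^{1}$ on $(0,\infty)$, with derivative equal to the equilibrium cost of the modified game; only the convexity and the $C^{1}$-smoothness will be needed below.

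For the second assertion I would write $\PoA$ as the quotient in \eqref{eq:PoA}, namely $\PoA(\rate)=\rate\,\eqcost(\rate)/\opt{\valueW}(\rate)$, and control numerator and denominator separately. By \cref{pr:V-C1} the equilibrium cost $\eqcost(\rate)=\valueW'(\rate)$ is continuous, so the numerator $\rate\mapsto\rate\,\eqcost(\rate)$ is continuous; the denominator $\opt{\valueW}$ is $C^{1}$, hence continuous, and is strictly positive on $(0,\infty)$, so it never vanishes. Therefore $\PoA$ is continuous. For differentiability I would fix an $\rate$ at which $\eqcost$ is differentiable: the numerator $\rate\,\eqcost(\rate)$ is then a product of functions differentiable at $\rate$, the denominator is differentiable at $\rate$ (indeed everywhere) and nonzero, so the quotient rule gives differentiability of $\PoA$ at $\rate$, with
\[
\PoA'(\rate)=\frac{\parens[\big]{\eqcost(\rate)+\rate\,\eqcost'(\rate)}\,\opt{\valueW}(\rate)-\rate\,\eqcost(\rate)\,\opt{\valueW}'(\rate)}{\opt{\valueW}(\rate)^{2}}.
\]

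The routine parts are the verification that $\opt{\cost}_{\edge}$ fits the framework of \cref{pr:V-C1} and the quotient-rule bookkeeping. The one point that deserves care, and which I expect to be the main (if minor) obstacle, is the strict positivity of the denominator $\opt{\valueW}(\rate)$ for $\rate>0$, since this is exactly what makes $\PoA(\rate)$ well defined. It can be settled by observing that $\opt{\valueW}(\rate)=\sum_{\edge}\opt{\load}_{\edge}\,\cost_{\edge}(\opt{\load}_{\edge})=0$ would force, after a flow decomposition of the optimal flow, an $\source$-$\sink$ path carrying positive flow all of whose edges have vanishing cost at the optimal load; this is precisely the degenerate ``free path'' situation, which is excluded once $\PoA$ is assumed to be well defined on $(0,\infty)$.
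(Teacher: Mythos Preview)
Your proposal is correct and follows essentially the same approach as the paper: define the marginal costs $\opt{\cost}_{\edge}(\load)=\cost_{\edge}(\load)+\load\,\cost_{\edge}'(\load)$, observe that they are continuous and nondecreasing so that \cref{pr:V-C1} applies to the social-optimum program, and then handle $\PoA$ as the quotient $\rate\,\eqcost(\rate)/\opt{\valueW}(\rate)$ using the continuity of $\eqcost$ from \cref{pr:V-C1}. Your write-up is in fact more explicit than the paper's, which does not spell out the verification of the hypotheses on $\opt{\cost}_{\edge}$ or the positivity of the denominator.
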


\begin{proof}
The assumptions on $\cost_{\edge}(\argdot)$ imply that the \emph{marginal costs}
\begin{equation}
\label{eq:marginal-cost}
\opt{\cost}_{\edge}(\load_{\edge})\coloneqq \opt{\Cost}_{\edge}'(\load_{\edge})
=\cost_{\edge}(\load_{\edge})+\load_{\edge} \,\cost_{\edge}'(\load_{\edge})
\end{equation}
are continuous and nondecreasing. It follows that the optimal flows are the Wardrop equilibria for these marginal costs,
and the smoothness of $\opt{\valueW}(\rate)$ follows from \cref{pr:V-C1}. 
For the \ac{PoA} it suffices to observe  that $\eqcost(\rate)$ is continuous and then use the equality \eqref{eq:PoA}. 
\end{proof}

\citet{WuMoh:arXiv2020}  recently established a very general result on the continuity of the \ac{PoA} with respect 
to the demands and also with respect to perturbations of the cost functions. However, differentiability was not addressed in their paper.

\subsection{Differentiability of equilibrium costs}
In order to use \cref{pr:PoA-differentiable} it is convenient to find conditions that ensure the 
differentiability of the equilibrium cost $\eqcost(\rate)$. In this section we present one such result, which 
also guarantees the differentiability of the resource loads $\eq{\load}_{\edge}(\rate)$.
This follows from the implicit function theorem applied to the  
system of first order optimality conditions for \cref{eq:Wardrop-variational}.
Given a vertex $\vertex\in\vertices$, call $\outneigh(\vertex)$ and $\inneigh(\vertex)$ the sets of out-edges and in-edges of $\vertex$, respectively, and $\routes(\source,\vertex)$ the set of all paths from $\source$ to $\vertex$. 
Moreover, call $\edges\parens*{\vertex,\vertexalt}$ the set of all edges $\edge\in\edges$ that go from $\vertex$ to $\vertexalt$.
Then, an equilibrium load profile $\parens{\eq{\load}_{\edge}}_{\edge\in\edges}$ for a total demand $\rate$ is characterized as a solution of 
\begin{align}
\label{eq:Nonnegativity}
&\load_{\edge}\ge 0 ~\text{ for all }~\edge\in\edges,\\
\label{eq:FlowConservation}
&\sum_{\edge\in\outneigh(\vertex)}\load_{\edge}-\sum_{\edge\in\inneigh(\vertex)}\load_{\edge}=
\begin{cases}
~~0 & \text{ if }\vertex\ne\source,\sink,\\
~~\rate & \text{ if }\vertex=\source,\\
-\rate & \text{ if }\vertex=\sink,
\end{cases}
\quad\text{for all }\vertex\in\vertices, \\
\label{eq:BestResponse}
&\eqcostvertex_{\source}=0~\text{ and }~\eqcostvertex_{\vertexalt}\le \eqcostvertex_{\vertex} +\eqcostedge_{\edge}~\text{ for all }~\edge\in\edges(\vertex,\vertexalt),\\
\label{eq:Complementarity}
&\load_{\edge}\cdot (\eqcostvertex_{\vertexalt}-\eqcostvertex_{\vertex}-\eqcostedge_{\edge})=0~\text{ for all }~\edge\in\edges(\vertex,\vertexalt),\\
\label{eq:tau-is-cost}
&\eqcostedge_{\edge}=\cost_{\edge}(\load_{\edge})~\text{ for all }~\edge\in\edges,
\end{align}
where $\eqcostedge_{\edge}$ is the equilibrium cost of the edge $\edge$ and 
\begin{equation}
\label{eq:T-min}
\eqcostvertex_{\vertex}\coloneqq
\min_{\route\in\routes(\source,\vertex)}\sum_{\edge\in\route}\eqcostedge_{\edge}
\end{equation}
is the equilibrium cost of a shortest path from the origin $\source$ to vertex $\vertex$.

For the subsequent analysis we define the \emph{active network} as the set of all edges that  lie
on some shortest path. We also consider the demand levels at which this set changes.
 
\begin{definition} 
\label{de:activenetwork}
For each $\rate>0$ we let $\routes(\rate)$ be the set of all shortest paths from the source $\source$ to the sink $\sink$
with cost at equilibrium equal to $\eqcost(\rate)$, and we define the \emph{active network} $\activ{\edges}(\rate)$ 
as the union of  the edges on all these paths $\route\in \routes(\rate)$.

The active network is said to be \emph{locally constant} at $\rate_{0}$ if there exists $\varepsilon>0$ such that $\activ{\edges}(\rate)$ 
is the same for all $\rate\in[\rate_{0}-\varepsilon,\rate_{0}+\varepsilon]$.

The demand $\rate_{0}$ is called an \emph{$\activ{\edges}$-breakpoint} if there exists $\varepsilon>0$ such that $\activ{\edges}(\rate)$  is constant over each of the intervals $[\rate_{0}\!-\!\varepsilon,\rate_{0})$ and $(\rate_{0},\rate_{0}\!+\!\varepsilon]$, with
$\activ{\edges}(\rate_{0}\!-\!\varepsilon)\ne\activ{\edges}(\rate_{0}\!+\!\varepsilon)$.

\end{definition}

\begin{remark}
\label{re:edge-inactive}
Since the equilibrium costs  $\eqcostedge_{\edge}=\eqcostedge_{\edge}(\rate)$ and $\eqcost(\rate)$ are unique for each $\rate$, 
it follows that both $\routes(\rate)$ and $\activ{\edges}(\rate)$ 
are also uniquely determined. 
Moreover, the continuity of $\eqcostedge_{\edge}(\rate)$ 
implies that an edge $\edge\not\in \activ{\edges}(\rate_{0})$ that is inactive at $\rate_{0}$ remains inactive for $\rate$ 
near $\rate_{0}$, that is to say $\activ{\edges}(\rate_{0})^c\subseteq \activ{\edges}(\rate)^c$.
\end{remark}

\cref{fi:active_networks} shows the evolution of the active network at different demand
levels for the game in \cref{fi:oscillations-a}, with five $\activ{\edges}$-breakpoints at $\rate=1,2,3,4,7$. 
Notice the correspondence with the break points in the \acl{PoA} in \cref{fi:oscillations-b}.
\begin{figure}[ht]
\subfigure[$0\leq\rate< 1$]  
{  
\begin{tikzpicture}[scale=.4]  
   \node[shape=circle,draw=black] (O) at (-4,0)  { }; 
   \node[shape=circle,draw=black] (v1) at (0,0)  {}; 

   \node[shape=circle,draw=black] (v2) at (-2,3)  {}; 
   \node[shape=circle,draw=black] (v3) at (2,3)  {}; 
   \node[shape=circle,draw=black] (D) at (4,0)  {}; 
    
     \draw[line width=1pt,->] (O) to  (v1);
   \draw[line width=1pt,->] (v1) to (v2);
   \draw[line width=1pt,->] (v2) to (v3);
   \draw[line width=1pt,->] (v3) to (D);

\end{tikzpicture}
}  
\hspace{1cm}
\subfigure[$1\leq\rate< 2$]  
{  
\begin{tikzpicture}[scale=.4]  
   \node[shape=circle,draw=black] (O) at (-4,0)  { }; 
   \node[shape=circle,draw=black] (v1) at (0,0)  {}; 

   \node[shape=circle,draw=black] (v2) at (-2,3)  {}; 
   \node[shape=circle,draw=black] (v3) at (2,3)  {}; 
   \node[shape=circle,draw=black] (D) at (4,0)  {}; 
    
   \draw[line width=1pt,->] (O) to  (v1);
   \draw[line width=1pt,->] (O) to  (v2);
   \draw[line width=1pt,->] (v1) to (v2);
   \draw[line width=1pt,->] (v2) to (v3);
   \draw[line width=1pt,->] (v3) to (D);

\end{tikzpicture}
}  
\hspace{1cm}
\subfigure[$2\leq\rate\leq 3$]  
{  
\begin{tikzpicture}[scale=.4]  
   \node[shape=circle,draw=black] (O) at (-4,0)  { }; 
   \node[shape=circle,draw=black] (v1) at (0,0)  {}; 

   \node[shape=circle,draw=black] (v2) at (-2,3)  {}; 
   \node[shape=circle,draw=black] (v3) at (2,3)  {}; 
   \node[shape=circle,draw=black] (D) at (4,0)  {}; 
    
   \draw[line width=1pt,->] (O) to  (v1);
   \draw[line width=1pt,->] (O) to  (v2);
   \draw[line width=1pt,->] (v1) to (v2);
   \draw[line width=1pt,->] (v1) to (v3);
   \draw[line width=1pt,->] (v2) to (v3);
   \draw[line width=1pt,->] (v3) to (D);

\end{tikzpicture}
}  

\subfigure[$3<\rate<4$]  
{  
\begin{tikzpicture}[scale=.4]  
   \node[shape=circle,draw=black] (O) at (-4,0)  { }; 
   \node[shape=circle,draw=black] (v1) at (0,0)  {}; 

   \node[shape=circle,draw=black] (v2) at (-2,3)  {}; 
   \node[shape=circle,draw=black] (v3) at (2,3)  {}; 
   \node[shape=circle,draw=black] (D) at (4,0)  {}; 
    
   \draw[line width=1pt,->] (O) to  (v1);
   \draw[line width=1pt,->] (O) to  (v2);
   \draw[line width=1pt,->] (v1) to (v3);
   \draw[line width=1pt,->] (v2) to (v3);
   \draw[line width=1pt,->] (v3) to (D);

\end{tikzpicture}
}  
\hspace{1cm}
\subfigure[$4\leq\rate\leq 7$]  
{  
\begin{tikzpicture}[scale=.4]  
   \node[shape=circle,draw=black] (O) at (-4,0)  { }; 
   \node[shape=circle,draw=black] (v1) at (0,0)  {}; 

   \node[shape=circle,draw=black] (v2) at (-2,3)  {}; 
   \node[shape=circle,draw=black] (v3) at (2,3)  {}; 
   \node[shape=circle,draw=black] (D) at (4,0)  {}; 
    
  \draw[line width=1pt,->] (O) to  (v1);
   \draw[line width=1pt,->] (O) to  (v2);
   \draw[line width=1pt,->] (v1) to (D);
   \draw[line width=1pt,->] (v1) to (v3);
   \draw[line width=1pt,->] (v2) to (v3);
   \draw[line width=1pt,->] (v3) to (D);

\end{tikzpicture}
}  
\hspace{1cm}
\subfigure[$7<\rate< \infty$]  
{  
\begin{tikzpicture}[scale=.4]  
   \node[shape=circle,draw=black] (O) at (-4,0)  { }; 
   \node[shape=circle,draw=black] (v1) at (0,0)  {}; 

   \node[shape=circle,draw=black] (v2) at (-2,3)  {}; 
   \node[shape=circle,draw=black] (v3) at (2,3)  {}; 
   \node[shape=circle,draw=black] (D) at (4,0)  {}; 
    
   \draw[line width=1pt,->] (O) to  (v1);
   \draw[line width=1pt,->] (O) to  (v2);
   \draw[line width=1pt,->] (v1) to (D);
   \draw[line width=1pt,->] (v2) to (v3);
   \draw[line width=1pt,->] (v3) to (D);

\end{tikzpicture}
}  

\caption{The active network for the graph on \cref{fi:oscillations}, with $\activ{\edges}$-breakpoints at $\rate=1,2,3,4,7$.}
\label{fi:active_networks}
\end{figure}

Although in general there can be infinitely many $\activ{\edges}$-breakpoints (see \cref{pr:RepeatingActiveRegimes} and \cref{re:repeat}),
their number is finite for series-parallel networks (\emph{cf.} \cref{pr:series-parallel-nondecreasing-traffic})
and also for general networks with affine costs (\emph{cf.} \cref{pr:affine-flows}).
In both cases, once an active network changes, it may never occur again at higher demand levels.

An edge carrying a strictly  positive flow at equilibrium must be on some optimal path, and, as a consequence,  
belongs to the active network. However, the converse may fail when a path becomes active 
but carries no flow. In order to prove the smoothness of the equilibrium flows we need to avoid this situation,
which leads to the following definition of a \emph{regular demand}. 

\begin{definition} A demand $\rate>0$ is called \emph{regular} if there is an equilibrium with $\load_{\edge}>0$ for all $\edge\in \activ{\edges}(\rate)$.
\end{definition}

Regularity is just \emph{strict complementarity}. 
Indeed, the complementarity condition \eqref{eq:Complementarity} imposes that for each $\edge\in\edges$ either $\load_{\edge}$ or $(\eqcostvertex_{\vertexalt}-\eqcostvertex_{\vertex}-\eqcostedge_{\edge})$ is zero, whereas strict complementarity requires exactly one of these expressions to be zero.
As shown next, when the costs are strictly increasing this implies that the active network is locally constant, so that a regular demand cannot  
be an $\activ{\edges}$-breakpoint. 
We note however that there can be nonregular demands at which the active network is still locally constant (see \cref{ex:WN}).

\begin{lemma} Suppose that the costs $\cost_{\edge}(\argdot)$ are strictly increasing and continuous. If $\rate_{0}$ is a regular demand then
the active network $\activ{\edges}(\rate)$ is locally constant at $\rate_{0}$.
\end{lemma}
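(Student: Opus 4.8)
The plan is to prove, for all $\rate$ in a neighborhood of $\rate_{0}$, the two inclusions $\activ{\edges}(\rate)\subseteq\activ{\edges}(\rate_{0})$ and $\activ{\edges}(\rate_{0})\subseteq\activ{\edges}(\rate)$. Since the edge set $\edges$ is finite, a single $\varepsilon>0$ can be chosen to work for every edge, and the two inclusions then give $\activ{\edges}(\rate)=\activ{\edges}(\rate_{0})$ on $[\rate_{0}-\varepsilon,\rate_{0}+\varepsilon]$, which is exactly the local constancy required by \cref{de:activenetwork}.

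The inclusion $\activ{\edges}(\rate)\subseteq\activ{\edges}(\rate_{0})$ is the content of \cref{re:edge-inactive}, so I would simply invoke it. For completeness, the underlying reason is that an edge $\edge\in\edges(\vertex,\vertexalt)$ is inactive at $\rate_{0}$ precisely when the best-response inequality \eqref{eq:BestResponse} is strict there, i.e.\ $\eqcostvertex_{\vertexalt}(\rate_{0})<\eqcostvertex_{\vertex}(\rate_{0})+\eqcostedge_{\edge}(\rate_{0})$. By \cref{pr:V-C1} the equilibrium edge costs $\eqcostedge_{\edge}(\rate)$ are continuous, hence the node potentials $\eqcostvertex_{\vertex}(\rate)$, being minima of finitely many continuous functions as in \eqref{eq:T-min}, are continuous as well; a strict inequality therefore persists for $\rate$ near $\rate_{0}$, so every edge inactive at $\rate_{0}$ remains inactive nearby.

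For the reverse inclusion I would exploit regularity together with the uniqueness of the loads. Because the costs are strictly increasing, \cref{cor:V-C1} guarantees that the equilibrium loads $\eq{\load}_{\edge}(\rate)=\cost_{\edge}^{-1}(\eqcostedge_{\edge}(\rate))$ are unique and continuous. Consequently, the assumption that $\rate_{0}$ is regular, phrased as the existence of an equilibrium with $\load_{\edge}>0$ on the active network, is equivalent to the statement that the unique loads satisfy $\eq{\load}_{\edge}(\rate_{0})>0$ for every $\edge\in\activ{\edges}(\rate_{0})$. Since $\activ{\edges}(\rate_{0})$ is a finite set of edges, continuity of the loads then yields an $\varepsilon>0$ with $\eq{\load}_{\edge}(\rate)>0$ for all such $\edge$ and all $\rate\in[\rate_{0}-\varepsilon,\rate_{0}+\varepsilon]$.

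Finally I would observe that a positive edge load forces membership in the active network: if $\eq{\load}_{\edge}(\rate)>0$ then some path $\route\ni\edge$ carries positive flow, and by the Wardrop conditions \eqref{eq:Wardrop} that path is a shortest path, so $\edge\in\activ{\edges}(\rate)$. This gives $\activ{\edges}(\rate_{0})\subseteq\activ{\edges}(\rate)$ on the same neighborhood, and combining it with the inclusion from \cref{re:edge-inactive} completes the argument. The only step that genuinely requires care is the passage from the existential formulation of regularity to a statement about the unique, continuous load functions; this is precisely where strict monotonicity of the costs (through \cref{cor:V-C1}) is essential, and it is also what prevents a regular demand from being an $\activ{\edges}$-breakpoint.
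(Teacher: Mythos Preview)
Your proof is correct and follows essentially the same approach as the paper: use \cref{re:edge-inactive} for one inclusion, and for the reverse use regularity together with continuity of the equilibrium loads to preserve strict positivity on $\activ{\edges}(\rate_{0})$. If anything, your write-up is slightly more careful in invoking \cref{cor:V-C1} (rather than \cref{pr:V-C1}) for the continuity and uniqueness of the loads, and in making explicit why the existential formulation of regularity translates into positivity of the unique load functions.
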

\begin{proof} 
From regularity all active edges $\edge\in\activ{\edges}(\rate_{0})$ satisfy $\eq{\load}_{\edge}(\rate_{0})>0$.
By \cref{pr:V-C1} the maps $\eq{\load}_{\edge}(\argdot)$ are continuous, so these strict inequalities are preserved for $\rate$ near $\rate_{0}$, and therefore $\activ{\edges}(\rate_{0})\subseteq \activ{\edges}(\rate)$.
This, combined with \cref{re:edge-inactive}, yields $\activ{\edges}(\rate)= \activ{\edges}(\rate_{0})$ for $\rate$ close to $\rate_{0}$.
\end{proof}

We are now ready to establish the smoothness of the equilibrium.

\begin{proposition}
\label{pr:EqCost_Differentiable}
Assume that the costs $\cost_{\edge}(\argdot)$ are $C^1$ with strictly positive derivative. 
If $\rate_{0}$ is a regular demand then $\rate\mapsto(\eq{\loadprof}(\rate),\eqcostedgevec(\rate),\eqcostvertexvec(\rate))$ is continuously differentiable in a neighborhood of $\rate_{0}$.
In particular the equilibrium cost  $\rate\mapsto \eqcost(\rate)$ is $C^1$ near $\rate_{0}$.
\end{proposition}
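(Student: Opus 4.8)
The plan is to realize the equilibrium, near $\rate_{0}$, as the unique solution of a square, smooth system of equations supported on the active network, and then invoke the implicit function theorem. The decisive step will be the invertibility of the Jacobian of this system, which rests on two structural facts: the strict positivity of the derivatives $\cost_{\edge}'$, and the connectivity of the active network.

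First I would use the preceding lemma: since $\rate_{0}$ is regular, the active network is locally constant, equal to a fixed set $\activ{\edges}$ on a neighborhood of $\rate_{0}$; let $\activ{\vertices}$ be the vertices met by $\activ{\edges}$. Regularity is strict complementarity, so at $\rate_{0}$ every active edge $\edge\from\vertex\to\vertexalt$ satisfies $\eq{\load}_{\edge}>0$ together with the tight relation $\eqcostvertex_{\vertexalt}=\eqcostvertex_{\vertex}+\cost_{\edge}(\load_{\edge})$, while every inactive edge carries zero flow with a strict gap $\eqcostvertex_{\vertexalt}<\eqcostvertex_{\vertex}+\cost_{\edge}(0)$ in \eqref{eq:Complementarity}. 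I would therefore freeze the inactive edges at $\load_{\edge}\equiv 0$ (trivially $C^{1}$, legitimate by \cref{re:edge-inactive}) and write a reduced system $F(\loadprof,\eqcostvertexvec,\rate)=0$ in the unknowns $(\load_{\edge})_{\edge\in\activ{\edges}}$ and $(\eqcostvertex_{\vertex})_{\vertex\in\activ{\vertices}}$, consisting of: flow conservation \eqref{eq:FlowConservation} at each active vertex other than $\sink$; the tightness equations $\eqcostvertex_{\vertexalt}-\eqcostvertex_{\vertex}-\cost_{\edge}(\load_{\edge})=0$ for $\edge\in\activ{\edges}$; and the normalization $\eqcostvertex_{\source}=0$. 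The equation count $(\abs{\activ{\vertices}}-1)+\abs{\activ{\edges}}+1=\abs{\activ{\vertices}}+\abs{\activ{\edges}}$ matches the number of unknowns, and $F$ is $C^{1}$ because the $\cost_{\edge}$ are.

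The heart of the argument is the Jacobian $J=\partial F/\partial(\loadprof,\eqcostvertexvec)$ at $\rate_{0}$. Writing $Z$ for the node-arc incidence matrix of the active network and $D=\diag(\cost_{\edge}'(\eq{\load}_{\edge}(\rate_{0})))$, the flow-conservation rows contribute $(Z,0)$ (with the $\sink$-row deleted), the tightness rows contribute $(-D,-Z^{\top})$, and the normalization row contributes $(0,e_{\source}^{\top})$. To show $J$ is nonsingular I would take $(\delta\loadprof,\delta\eqcostvertexvec)\in\ker J$. The deleted conservation row is redundant—each column of $Z$ sums to zero—so $\widetilde{Z}\,\delta\loadprof=0$ already gives $Z\,\delta\loadprof=0$; the tightness rows give $\delta\loadprof=-D^{-1}Z^{\top}\delta\eqcostvertexvec$, which is exactly where $\cost_{\edge}'>0$ enters, making $D$ invertible. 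Substituting yields $ZD^{-1}Z^{\top}\delta\eqcostvertexvec=0$, and pairing with $\delta\eqcostvertexvec$ gives $\norm{D^{-1/2}Z^{\top}\delta\eqcostvertexvec}^{2}=0$, hence $Z^{\top}\delta\eqcostvertexvec=0$. Thus $\delta\eqcostvertexvec$ is constant along every active edge; since every active vertex lies on a shortest $\source$--$\sink$ path, the undirected active network is connected, so $\delta\eqcostvertexvec$ is a single constant vector, and $\delta\eqcostvertex_{\source}=0$ forces $\delta\eqcostvertexvec=0$, whence $\delta\loadprof=0$. Therefore $\ker J=\{0\}$ and $J$ is invertible.

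With $J$ invertible, the implicit function theorem furnishes a $C^{1}$ map $\rate\mapsto(\loadprof(\rate),\eqcostvertexvec(\rate))$ solving $F=0$ near $\rate_{0}$ and coinciding with the equilibrium at $\rate_{0}$. Finally I would check this local solution is the equilibrium for all nearby $\rate$: extending by $\load_{\edge}\equiv 0$ on inactive edges, the strict inequalities $\load_{\edge}(\rate)>0$ on active edges and $\eqcostvertex_{\vertexalt}<\eqcostvertex_{\vertex}+\cost_{\edge}(0)$ on inactive edges persist by continuity, so the extended profile satisfies all of \eqref{eq:Nonnegativity}--\eqref{eq:tau-is-cost}; by the uniqueness of the equilibrium loads and costs (\cref{pr:V-C1} and \cref{cor:V-C1}) it is the equilibrium. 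Consequently $\eq{\loadprof}(\rate)$, the edge costs $\eqcostedgevec(\rate)=(\cost_{\edge}(\eq{\load}_{\edge}(\rate)))$, and the potentials $\eqcostvertexvec(\rate)$ are $C^{1}$ near $\rate_{0}$; in particular $\eqcost(\rate)=\eqcostvertex_{\sink}(\rate)$ is $C^{1}$, as claimed. The main obstacle, and the only place the hypothesis $\cost_{\edge}'>0$ is genuinely needed, is the Jacobian computation of the third paragraph.
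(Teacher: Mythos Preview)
Your proof is correct and follows the same overall strategy as the paper: restrict to the active network (locally constant by regularity), write a square $C^{1}$ system encoding flow conservation, tightness along active edges, and the normalization $\eqcostvertex_{\source}=0$, and apply the implicit function theorem. The one substantive difference is how invertibility of the Jacobian is established. The paper keeps the linearized complementarity condition, simplifies it via strict complementarity to $\incrcostvertex_{\vertexalt}=\incrcostvertex_{\vertex}+\cost_{\edge}'(\load_{\edge})\incrload_{\edge}$, and then recognizes the resulting homogeneous system as the \ac{KKT} conditions of a strongly convex quadratic program $\min\sum_{\edge}\tfrac{1}{2}\cost_{\edge}'(\load_{\edge})\incrload_{\edge}^{2}$ under linear flow-conservation constraints; strong convexity forces $\incrload=0$, and the multipliers $\incrcostvertex$ then vanish. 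You instead eliminate $\eqcostedge$ up front, write the Jacobian in block form via the node--arc incidence matrix $Z$ and the diagonal $D=\diag(\cost_{\edge}')$, and kill the kernel by the weighted-Laplacian identity $\delta\eqcostvertexvec^{\top}ZD^{-1}Z^{\top}\delta\eqcostvertexvec=0$ together with connectivity of the active network. These are two faces of the same computation (your Laplacian is the Schur complement of the paper's \ac{KKT} matrix), but your route is more explicitly graph-theoretic and makes the role of connectivity visible, while the paper's optimization reading makes the role of strict positivity of $\cost_{\edge}'$ (strong convexity) more transparent. One small point you leave implicit, which the paper spells out: for vertices $\vertexalt\notin\activ{\vertices}$ the potential $\eqcostvertex_{\vertexalt}(\rate)$ is obtained from the last active vertex on a shortest path to $\vertexalt$ plus a constant offset, so the full vector $\eqcostvertexvec(\rate)$ is $C^{1}$ as well.
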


\begin{proof}
From \cref{pr:V-C1,cor:V-C1}, the equilibrium costs $\eqcostedge_{\edge}(\rate)$
and loads $\eq{\load}_{\edge}(\rate)$ are uniquely defined and continuous in $\rate$.
Hence, the equilibrium cost $\eqcostvertex_\vertex(\rate)$ of a shortest path to any vertex $\vertex$ is also continuous.
These functions $\eq{\load}_{\edge}, \eqcostedge_{\edge}$ and $\eqcostvertex_\vertex$  
satisfy in particular \cref{eq:FlowConservation,eq:Complementarity,eq:tau-is-cost}.

Now, since $\rate_{0}$ is regular, the active network is locally constant.
Let $\activ{\edges}_{0}=\activ{\edges}(\rate_{0})$ be this active network and $\vertices_{0}$ the corresponding vertices.
Moreover, call $\activ{\edges}_{0}\parens*{\vertex,\vertexalt}$ the set of all edges $\edge\in\activ{\edges}_{0}$ that go from $\vertex$ to $\vertexalt$.
For $\rate$ near $\rate_{0}$ we have $\eq{\load}_{\edge}(\rate)=0$ for all $\edge\not\in\activ{\edges}_{0}$; hence, these functions are trivially differentiable. 
Also for $\vertexalt\not\in \vertices_{0}$ we can take $\vertex\in \vertices_{0}$ the last vertex on a shortest path from $\source$ to $\vertexalt$, so that $\eqcostvertex_\vertexalt(\rate)=\eqcostvertex_\vertex(\rate)+\Delta_{\vertex,\vertexalt}$ where $\Delta_{\vertex,\vertexalt}$ is a constant travel time from $\vertex$ to $\vertexalt$. Hence it suffices to establish the smoothness of $\eq{\load}_{\edge}(\rate)$ for $\edge\in\activ{\edges}_{0}$ and $\eqcostvertex_\vertex(\rate)$ for $\vertex\in\vertices_{0}$. 
To this end, consider  
\cref{eq:FlowConservation,eq:Complementarity,eq:tau-is-cost} restricted to the edges in $\activ{\edges}_{0}$, together with the equation $\eqcostvertex_{\source}=0$, which gives the following system:
\begin{align}
\label{eq:Act-FlowConservation}
&\sum_{\edge\in\outneigh(\vertex)\cap\activ{\edges}_{0}}\load_{\edge}-\sum_{\edge\in\inneigh(\vertex)\cap\activ{\edges}_{0}}\load_{\edge}=
\begin{cases}
0\text{ if }\vertex\ne\source,\sink,\\
\rate\text{ if }\vertex=\source,
\end{cases}
\quad\text{for all }\vertex\in\vertices_{0}, \\
\label{eq:Act-Complementarity}
&\load_{\edge}\cdot (\eqcostvertex_{\vertexalt}-\eqcostvertex_{\vertex}-\eqcostedge_{\edge})=0\quad\text{for all }\edge\in\activ{\edges}_{0}(\vertex,\vertexalt),\\
\label{eq:Act-tau-is-cost}
&\eqcostedge_{\edge}=\cost_{\edge}(\load_{\edge}),\\
\label{eq:TOzero}
&\eqcostvertex_{\source}=0.
\end{align} 

To apply the implicit function theorem to this reduced system, we must check that the associated linearized system has a unique solution.
Let $\incrload_{\edge}$, $\incrcostvertex_{\vertex}$ and $\incrcostedge_{\edge}$ be respectively the increments in the variables $\load_{\edge}$, $\eqcostvertex_\vertex$ and $\eqcostedge_{\edge}$ for each $\edge\in\activ{\edges}_{0}$ and $\vertex\in\vertices_{0}$. 
The homogeneous linear system obtained from \cref{eq:Act-FlowConservation,eq:Act-Complementarity,eq:Act-tau-is-cost,eq:TOzero} is:
\begin{align}
\label{eq:Lin-FlowConservation}
&\sum_{\edge\in\outneigh(\vertex)\cap\activ{\edges}_{0}}\incrload_{\edge}-\sum_{\edge\in\inneigh(\vertex)\cap\activ{\edges}_{0}}\incrload_{\edge}=
0
\quad\text{for all }\vertex\in\vertices_{0}, \\
\label{eq:Lin-Complementarity}
&\load_{\edge}(\incrcostvertex_{\vertexalt}-\incrcostvertex_{\vertex}-\incrcostedge_{\edge})+\incrload_{\edge}(\eqcostvertex_\vertexalt-\eqcostvertex_\vertex-\eqcostedge_{\edge})=0\quad\text{for all }\edge\in\activ{\edges}_{0}(\vertex,\vertexalt),\\
\label{eq:Lin-tau-is-cost}
&\incrcostedge_{\edge}=\cost_{\edge}'(\load_{\edge})\incrload_{\edge},\\
\label{eq:Lin:TOzero}
&\incrcostvertex_{\source}=0.
\end{align}
Strict complementarity on an active link implies that \cref{eq:Lin-Complementarity} is equivalent to
\begin{equation}\label{eq:Lin-SC}
\incrcostvertex_{\vertexalt}=\incrcostvertex_{\vertex}+\incrcostedge_{\edge}\qquad\text{for all }\edge\in\activ{\edges}_{0}(\vertex,\vertexalt),
\end{equation}
which, together with \cref{eq:Lin-tau-is-cost}, gives 
\begin{equation}\label{eq:45prime}
\incrcostvertex_{\vertexalt}=\incrcostvertex_{\vertex}+\cost'_{\edge} (\load_{\edge})\incrload_{\edge}\qquad\text{for all }\edge\in\activ{\edges}_{0}(\vertex,\vertexalt).
\end{equation}
These equations are the stationarity conditions for the strongly convex (since $\cost'_{\edge}(\load_{\edge})>0$) quadratic program
\begin{equation}
\label{eq:IFT-program}
\min_{\incrloadvec}\sum_{\edge\in\activ{\edges}_{0}}\frac 12 \cost'_{\edge}(\load_{\edge})\cdot \incrload_{\edge}^{2}\tag{$\prim$}
\end{equation}
under the constraints \eqref{eq:Lin-FlowConservation}. 
Indeed, associating a Lagrange multiplier to each of those constraints, we get the Lagrangian
\begin{equation*}
\lagrang(\incrloadvec,\incrcostvertexvec)=\sum_{\edge\in\activ{\edges}_{0}}\frac 12 \cost'_{\edge}(\load_{\edge})\incrload_{\edge}^{2}+\sum_{\vertex\in\vertices_{0}}\incrcostvertex_{\vertex}\left(\sum_{\edge\in\outneigh(\vertex)\cap\activ{\edges}_{0}}\incrload_{\edge}-\sum_{\edge\in\inneigh(\vertex)\cap\activ{\edges}_{0}}\incrload_{\edge} \right)
\end{equation*}
and the equation $\partial\lagrang/\partial \incrload_{\edge}=0$ is precisely equivalent to \cref{eq:45prime}.
Hence, every solution of \cref{eq:Lin-FlowConservation,eq:Lin-Complementarity,eq:Lin-tau-is-cost,eq:Lin:TOzero} corresponds to an optimal solution of \eqref{eq:IFT-program}. 
Since $\incrload_{\edge}=0$ for all $\edge\in\activ{\edges}_{0}$ is feasible, it is also the unique optimal solution. 
Then \cref{eq:Lin-tau-is-cost} yields $\incrcostedge_{\edge}=0$ for all $\edge\in\activ{\edges}_{0}$, and 
 from \cref{eq:Lin-SC,eq:Lin:TOzero} we also get $\incrcostvertex_{\vertex}=0$ for all $\vertex\in\vertices_{0}$.

Since the linear system  \cref{eq:Lin-FlowConservation,eq:Lin-Complementarity,eq:Lin-tau-is-cost,eq:Lin:TOzero} has only the trivial solution, the Jacobian of \cref{eq:Act-FlowConservation,eq:Act-Complementarity,eq:Act-tau-is-cost,eq:TOzero} with respect to $\loadprof,\eqcostedgevec,\eqcostvertexvec$ is invertible and the implicit function theorem implies the smoothness of the solution. In particular $\eqcost(\rate)=\eqcostvertex_\sink(\rate)$ is continuously differentiable.
\end{proof}

\begin{remark}
The directional differentiability of the equilibrium loads with respect to parameters was investigated by \citet{Pat:TS2004} and \citet{JosPat:TRB2007}, 
though it can also be derived from \citet[Theorem~5.1]{Sha:SIAMJCO1988}.
Theorem~10 in \citet{Pat:TS2004} shows that differentiability holds provided that the edge loads in the linearized 
system are unique (this is what is done in the proof above) and assuming in addition that all the route flows corresponding 
to the solutions of the linearized system satisfy an additional vanishing condition. 
Our \cref{pr:EqCost_Differentiable} is more straightforward as it follows directly from the implicit function theorem, which gives in addition the continuity of the derivatives. 
It is also easier to apply and to interpret: it just requires to check that all active links carry a positive flow. 
In other words, smoothness can only fail at critical values of the demand where a new link becomes active but is not yet carrying flow. 
This simpler sufficient condition is all that is needed hereafter.
\end{remark}

Combining \cref{pr:PoA-differentiable,pr:EqCost_Differentiable} we obtain the following result on the differentiability of the \acl{PoA}.

\begin{theorem}\label{thm:PoA-differentiable}
Suppose that $\cost_{\edge}(\argdot)$ are $C^{1}$ with strictly positive derivative and 
$\load_{\edge}\,\cost_{\edge}(\load_{\edge})$ convex. Then $\PoA(\argdot)$ is continuously differentiable at each regular demand level $\rate_{0}$.
\end{theorem}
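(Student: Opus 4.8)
The plan is simply to assemble the two previously established propositions, whose hypotheses together are exactly those assumed here, and then apply the quotient rule. Throughout I write $\PoA(\rate)=\rate\,\eqcost(\rate)/\opt{\valueW}(\rate)$ as in \eqref{eq:PoA}, so the task reduces to showing that both the numerator and the denominator are $C^1$ near $\rate_0$ and that the denominator does not vanish.

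First I would invoke \cref{pr:EqCost_Differentiable}. Its hypotheses---costs $C^1$ with strictly positive derivative, and $\rate_0$ regular---are granted, so $\rate\mapsto\eqcost(\rate)$ is continuously differentiable on a whole neighborhood of $\rate_0$. This is the key point that upgrades the conclusion from mere pointwise differentiability to $C^1$: it is not enough to know that $\eqcost$ is differentiable at the single point $\rate_0$, one needs it to be $C^1$ on a neighborhood, which \cref{pr:EqCost_Differentiable} provides. Consequently $\rate\mapsto\rate\,\eqcost(\rate)$, being the product of the smooth map $\rate\mapsto\rate$ with a $C^1$ function, is $C^1$ near $\rate_0$.

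Next I would handle the denominator $\opt{\valueW}$. The assumption that $\load_{\edge}\,\cost_{\edge}(\load_{\edge})$ is convex is precisely the hypothesis of \cref{pr:PoA-differentiable}, under which $\rate\mapsto\opt{\valueW}(\rate)$ is convex and $C^1$ on $(0,\infty)$ (the marginal costs $\opt{\cost}_\edge$ are continuous and nondecreasing, and the optimal flows are the Wardrop equilibria for these marginal costs). It remains to check that $\opt{\valueW}(\rate_0)>0$. Since $\rate_0>0$, every feasible flow routes a positive demand along at least one path, so some edge $\edge$ on that path carries load $\load_\edge>0$; strict monotonicity then gives $\cost_\edge(\load_\edge)>\cost_\edge(0)\ge 0$, whence $\SC(\flowprof)\ge\load_\edge\,\cost_\edge(\load_\edge)>0$ for every feasible $\flowprof$, and taking the minimum yields $\opt{\valueW}(\rate_0)>0$.

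Finally, combining the three facts---numerator $C^1$ near $\rate_0$, denominator $C^1$ on $(0,\infty)$, and denominator nonvanishing at $\rate_0$ (hence on a neighborhood, by continuity)---the quotient rule shows that $\PoA(\argdot)$ is $C^1$ on a neighborhood of $\rate_0$, and in particular continuously differentiable at $\rate_0$. I do not expect any genuine obstacle: the substantive analytic work is already contained in \cref{pr:V-C1,pr:EqCost_Differentiable}, and the only points requiring a moment's care are verifying the strict positivity of the denominator and being sure to use the neighborhood (rather than pointwise) version of the differentiability of $\eqcost$, so that the resulting derivative of $\PoA$ is itself continuous.
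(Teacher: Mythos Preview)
Your proposal is correct and follows exactly the paper's approach: the paper's proof is the single sentence ``Combining \cref{pr:PoA-differentiable,pr:EqCost_Differentiable} we obtain the following result,'' and you have simply unpacked what that combination entails (quotient rule, nonvanishing denominator, $C^1$ on a neighborhood rather than mere pointwise differentiability). If anything, your write-up is more careful than the paper's, which leaves the positivity of $\opt{\valueW}(\rate_0)$ and the passage from differentiable to $C^1$ entirely implicit.
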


While all $\activ{\edges}$-breakpoints are nonregular, there might exist other nonregular points 
that are not $\activ{\edges}$-breakpoints (see \cref{ex:WN}). We do not know if 
differentiability of $\PoA(\argdot)$ can fail at such additional nonregular points. 
In the next section we will show that for affine costs,
nonsmoothness can only occur at $\activ{\edges}$-breakpoints  and that there are finitely many of them.  
In contrast, for  general networks and nonlinear costs the number of $\activ{\edges}$-breakpoints can be unbounded. In this regard, it is worth noting
that for networks with a series-parallel topology (which excludes the Wheatstone network), the active network increases 
monotonically with the demand, which yields a sharp bound on the number of different active networks and $\activ{\edges}$-breakpoints 
that can occur as the demand grows from $0$ to $+\infty$.

\begin{definition}
\label{def:series-parallel}
The class of \acfi{SP}\acused{SP} networks can be constructed as follows:
\begin{itemize}
\item
A network with two vertices $\source, \sink$ and one edge $(\source,\sink)$ connecting them is \ac{SP}.

\item
A network obtained by joining in series two \ac{SP} networks by merging $\sink_{1}$ with $\source_{2}$ is \ac{SP}.

\item
A network obtained by joining in parallel two \ac{SP} networks by merging $\source_{1}$ with $\source_{2}$ and $\sink_{1}$ with $\sink_{2}$ is \ac{SP}.
\end{itemize}
\end{definition}

\begin{proposition}
\label{pr:series-parallel-nondecreasing-traffic}
Let $\graph$ be a \acl{SP} network. 
Then there exist equilibrium load profiles $\loadprof(\rate)$ whose components $\load_{\edge}(\rate)$ are nondecreasing functions of the demand $\rate$. Moreover, the active network $\activ{\edges}(\rate)$ is also nondecreasing with respect to inclusion
so that
the number of $\activ{\edges}$-breakpoints is bounded by the minimum between the number of paths and the number of edges.
\end{proposition}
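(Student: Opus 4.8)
The plan is to argue by structural induction on the recursive construction of \ac{SP} networks in \cref{def:series-parallel}, establishing simultaneously the two monotonicity assertions: that one can choose an equilibrium load profile with every component $\eq{\load}_{\edge}(\rate)$ nondecreasing in $\rate$, and that the active network $\activ{\edges}(\rate)$ is nondecreasing for inclusion. The base case is a single edge $\edge=(\source,\sink)$, where the unique equilibrium has $\eq{\load}_{\edge}(\rate)=\rate$ and $\activ{\edges}(\rate)=\{\edge\}$ for every $\rate$, so both assertions hold trivially.

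For a network obtained by joining two \ac{SP} networks $\graph_{1},\graph_{2}$ in series (merging $\sink_{1}$ with $\source_{2}$), every feasible flow must route the whole demand through the common vertex, so each $\graph_{i}$ carries exactly $\rate$. Hence an equilibrium of $\graph$ restricts to an equilibrium of $\graph_{i}$ at demand $\rate$, the loads are inherited edge by edge, and $\activ{\edges}(\rate)=\activ{\edges}_{1}(\rate)\cup\activ{\edges}_{2}(\rate)$ is a disjoint union of the two active networks. Both monotonicity properties then transfer immediately from the inductive hypothesis applied to $\graph_{1}$ and $\graph_{2}$.

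The parallel composition is the crux. When $\graph$ is obtained by joining $\graph_{1},\graph_{2}$ in parallel, the demand splits as $\rate=\rate_{1}+\rate_{2}$ with $\rate_{i}\ge 0$, and since every $\source$-$\sink$ path lies entirely in one branch, the value function is the infimal convolution $\valueW(\rate)=\min_{\rate_{1}+\rate_{2}=\rate}\bracks{\valueW_{1}(\rate_{1})+\valueW_{2}(\rate_{2})}$. At an optimal split the equilibrium costs of the two branches coincide, $\eqcost_{1}(\rate_{1})=\eqcost_{2}(\rate_{2})=\eqcost(\rate)$, unless a branch carries no flow, in which case $\eqcost_{i}(0)\ge\eqcost(\rate)$; by \cref{pr:V-C1} each $\eqcost_{i}=\valueW_{i}'$ is continuous and nondecreasing, as is $\eqcost=\valueW'$. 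The key step is to select the split monotonically: I would take $\rate_{1}(\rate)$ to be the \emph{largest} optimal branch-$1$ flow and set $\rate_{2}(\rate)=\rate-\rate_{1}(\rate)$. A standard exchange argument using the convexity of $\valueW_{1}$ and $\valueW_{2}$ shows that this largest optimal branch-$1$ flow is nondecreasing, and that the complementary branch-$2$ flow (which is the smallest optimal one) is nondecreasing as well; thus both $\rate_{1}(\argdot)$ and $\rate_{2}(\argdot)$ are nondecreasing, with $\rate_{1}(\rate)+\rate_{2}(\rate)=\rate$ by construction. Feeding these into the inductively monotone loads and active networks of $\graph_{i}$ yields $\eq{\load}_{\edge}(\rate)$ nondecreasing and $\activ{\edges}(\rate)=\activ{\edges}_{1}(\rate_{1}(\rate))\cup\activ{\edges}_{2}(\rate_{2}(\rate))$ nondecreasing; here one also uses that, because $\eqcost(\argdot)$ is nondecreasing, once a branch becomes competitive---carrying positive flow, or meeting the common cost at zero load---it remains competitive for all larger $\rate$, so a branch can only be added to, never removed from, the active network.

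The main obstacle is exactly this parallel selection when the costs are merely nondecreasing rather than strictly increasing: the optimal split need not be unique, since a flat piece of some $\eqcost_{i}$ lets flow migrate between the branches without changing the common cost, and neither the upper nor the lower generalized inverse of $\eqcost_{i}$ yields, on its own, a feasible split summing to $\rate$. Choosing the extremal (largest branch-$1$) split resolves this, with the exchange argument showing both its components vary monotonically, while the feasibility identity at points where some $\eqcost_{i}$ is locally constant is secured by the continuity from \cref{pr:V-C1}. Granting the two monotonicity statements, the breakpoint bound is immediate. Since $\activ{\edges}(\rate)$ is nondecreasing for inclusion, at every $\activ{\edges}$-breakpoint $\rate_{0}$ it grows strictly, $\activ{\edges}(\rate_{0}-\eps)\subsetneq\activ{\edges}(\rate_{0}+\eps)$. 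Each such jump adds at least one edge, so there are at most $\card{\edges}$ breakpoints; and each jump also adds at least one new path to the set $\braces{\route\in\routes:\route\subseteq\activ{\edges}(\rate)}$, because a newly active edge lies on some shortest path $\route\subseteq\activ{\edges}(\rate_{0}+\eps)$ with $\route\not\subseteq\activ{\edges}(\rate_{0}-\eps)$. As this set only grows and is contained in $\routes$, there are at most $\card{\routes}$ breakpoints, and the bound $\min(\card{\routes},\card{\edges})$ follows.
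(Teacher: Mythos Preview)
Your proof is correct and follows essentially the same structural-induction approach as the paper. The only cosmetic difference is in the parallel step: the paper selects the split explicitly as $g(\rate)=\inf\{x\in[0,\rate]:\eqcost^{1}(x)\geq\eqcost^{2}(\rate-x)\}$ (the \emph{smallest} branch-$1$ equilibrium flow, rather than your largest) and verifies monotonicity of both $g(\rate)$ and $\rate-g(\rate)$ directly from this formula, instead of framing it via infimal convolution and an exchange argument.
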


\begin{proof}
The result clearly holds for the network with only two vertices and a single link. 

Let $\graph^{1}$ and $\graph^{2}$ be two  \acl{SP}  networks for which the result is true,
and fix  two nondecreasing equilibrium loads ${\loadprof}^{1}(\rate)$ and ${\loadprof}^{2}(\rate)$, with their corresponding equilibrium costs $\eqcost^{1}(\rate)$ and $\eqcost^{2}(\rate)$ and active networks $\edges^{1}(\rate)$ and $\edges^{2}(\rate)$.

If $\graph^{1}$ and $\graph^{2}$ are connected in series, then an equilibrium is given by the coupling $\loadprof(\rate)=(\loadprof^{1}(\rate),\loadprof^{2}(\rate))$ with active network $\activ{\edges}(\rate)=\edges^{1}(\rate)\cup\edges^{2}(\rate)$, all of which are nondecreasing with $\rate$. 

The case in which $\graph^{1}$ and $\graph^{2}$ are joined in parallel, is slightly more involved.
Here an equilibrium splits as ${\loadprof}(\rate)=(\loadprof^{1}(\rate^{1}),\loadprof^{2}(\rate^{2}))$ where $\rate^{1}+\rate^{2}=\rate$ with $\rate^{2}=0$ if $\eqcost^{1}(\rate)<\eqcost^{2}(0)$, $\rate^{1}=0$ if $\eqcost^{2}(\rate)<\eqcost^{1}(0)$, and $\eqcost^{1}(\rate^{1})=\eqcost^{2}(\rate^{2})$ otherwise. 
More explicitly, if we let $g(\rate)=\inf\{x\in[0,\rate]: \eqcost^{1}(x)\geq \eqcost^{2}(\rate-x)\}$, with $g(\rate)=\rate$ when the latter set is empty, then both $\rate^{1}=g(\rate)$ and $\rate^{2}=\rate-g(\rate)$ turn out to be nondecreasing and therefore $\loadprof(\rate)=(\loadprof^{1}(g(\rate)),\loadprof^{2}(\rate-g(\rate)))$ is a nondecreasing equilibrium.
The monotonicity of the active network $\activ{\edges}(\rate)$ is similar. 
We have $\activ{\edges}(\rate)=\edges^{1}(\rate)$ if $\eqcost^{1}(\rate)<\eqcost^{2}(0)$,
$\activ{\edges}(\rate)=\edges^{2}(\rate)$ when $\eqcost^{2}(\rate)<\eqcost^{1}(0)$,
and $\activ{\edges}(\rate)=\edges^{1}(g(\rate))\cup\edges^{2}(\rate-g(\rate))$ otherwise,
and in all three cases the active network is nondecreasing.
\end{proof}

\begin{remark} 
A related result was obtained by \citet{Mil:GEB2006} for undirected networks.
His Lemma~2 shows that in a series-parallel network \emph{there exists some path} whose edge loads are increasing in the total traffic demand. 
\cref{pr:series-parallel-nondecreasing-traffic} proves that there exists an equilibrium in which this monotonicity holds for all edges and paths.

Concerned about complexity of computing parametric mincost flows, \citet[Corollary~4]{KliWar:MORfrth} also proved monotonicity of the output flows on the edges when costs are piecewise linear. 

\end{remark}


\section{Networks with affine cost functions.}
\label{se:affine}

\subsection{Equilibrium flows}
\label{suse:affine-equilibrium-flows}

In this section we consider the case of affine cost  functions 
\begin{equation}
\label{eq:affine-costs}
\cost_{\edge}(\load)
=\coeffa_{\edge}\cdot \load+\coeffb_{\edge},
\end{equation}
with $\coeffa_{\edge},\coeffb_{\edge}\ge 0$ for each $\edge\in\edges$. 
We recall that in this case we have a scaling law that relates the equilibrium and  optimum flows.

\begin{lemma}
[\protect{\citet[Lemma~2.3]{RouTar:JACM2002}}]
\label{pr:scaling-law}
Suppose $\cost_{\edge}$ is affine for all $\edge\in\edges$. 
Let $\eq{\flowprof}(\rate)$ be an equilibrium flow with corresponding load $\eq{\loadprof}(\rate)$.
Then $\opt{\flowprof}(\rate)=\frac{1}{2}\eq{\flowprof}(2\rate)$ is a socially optimal flow with load
$\opt{\loadprof}(\rate)=\frac{1}{2}\eq{\loadprof}(2\rate)$.
\end{lemma}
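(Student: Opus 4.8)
The plan is to exploit the characterization of equilibria and optima via Wardrop-type conditions, and to observe that the only difference between the two is a rescaling of the cost functions induced by the marginal-cost transformation. Recall from \eqref{eq:marginal-cost} that the socially optimal flow is the Wardrop equilibrium for the marginal costs $\opt{\cost}_{\edge}(\load)=\cost_{\edge}(\load)+\load\,\cost_{\edge}'(\load)$. In the affine case $\cost_{\edge}(\load)=\coeffa_{\edge}\load+\coeffb_{\edge}$ we have $\cost_{\edge}'(\load)=\coeffa_{\edge}$, so the marginal cost becomes $\opt{\cost}_{\edge}(\load)=2\coeffa_{\edge}\load+\coeffb_{\edge}$. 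The key algebraic identity is then
\begin{equation*}
\opt{\cost}_{\edge}(\load)=\cost_{\edge}(2\load),
\end{equation*}
which relates the marginal cost at load $\load$ to the original cost evaluated at $2\load$.

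First I would fix a demand $\rate$ and let $\opt{\loadprof}(\rate)$ be the candidate load profile $\tfrac12\eq{\loadprof}(2\rate)$. I would verify that $\opt{\flowprof}(\rate)=\tfrac12\eq{\flowprof}(2\rate)$ is feasible for demand $\rate$: since $\eq{\flowprof}(2\rate)\in\flows_{2\rate}$ satisfies $\sum_{\route}\eq{\flow}_{\route}(2\rate)=2\rate$ and has nonnegative components, halving it gives a nonnegative profile summing to $\rate$, hence an element of $\flows_{\rate}$, and the induced load is exactly $\tfrac12\eq{\loadprof}(2\rate)$ by linearity of the load map $\flow\mapsto\load$. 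The main step is to check that $\opt{\flowprof}(\rate)$ satisfies the Wardrop conditions \eqref{eq:Wardrop} for the marginal costs $\opt{\cost}_{\edge}$. For any path $\route$, using the identity above,
\begin{equation*}
\opt{\cost}_{\route}(\opt{\flowprof}(\rate))=\sum_{\edge\in\route}\opt{\cost}_{\edge}\bigl(\opt{\load}_{\edge}(\rate)\bigr)=\sum_{\edge\in\route}\cost_{\edge}\bigl(2\opt{\load}_{\edge}(\rate)\bigr)=\sum_{\edge\in\route}\cost_{\edge}\bigl(\eq{\load}_{\edge}(2\rate)\bigr)=\cost_{\route}(\eq{\flowprof}(2\rate)).
\end{equation*}

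Because $\eq{\flowprof}(2\rate)$ is a Wardrop equilibrium at demand $2\rate$, the right-hand side equals the equilibrium cost $\eqcost(2\rate)$ on every path with $\eq{\flow}_{\route}(2\rate)>0$ and is at least $\eqcost(2\rate)$ otherwise. Since $\opt{\flow}_{\route}(\rate)>0$ exactly when $\eq{\flow}_{\route}(2\rate)>0$, the marginal-cost path costs $\opt{\cost}_{\route}(\opt{\flowprof}(\rate))$ equal the common value $\eqcost(2\rate)$ on all used paths and are no smaller on unused ones. This is precisely \eqref{eq:Wardrop} for the marginal costs with multiplier $\eqcost(2\rate)$, so $\opt{\flowprof}(\rate)$ is a Wardrop equilibrium for $\opt{\cost}$, which by \cref{pr:PoA-differentiable} means it is a socially optimal flow for the original game at demand $\rate$. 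I expect the only delicate point to be the bookkeeping that the support of the halved flow coincides with the support of $\eq{\flowprof}(2\rate)$ and that the load map commutes with scalar multiplication; once the identity $\opt{\cost}_{\edge}(\load)=\cost_{\edge}(2\load)$ is in hand, the verification of the optimality conditions is essentially immediate.
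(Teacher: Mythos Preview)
Your proof is correct and is essentially the standard argument for this scaling law. Note, however, that the paper does not actually give its own proof of this lemma: it is stated as a citation from \citet[Lemma~2.3]{RouTar:JACM2002} and used as a black box. Your argument---reducing the optimality condition to the Wardrop condition for the marginal costs $\opt{\cost}_{\edge}$ and then exploiting the affine identity $\opt{\cost}_{\edge}(\load)=\cost_{\edge}(2\load)$---is exactly the standard one. One small remark: when you invoke \cref{pr:PoA-differentiable}, what you are really using is the observation in its \emph{proof} that optimal flows coincide with Wardrop equilibria for the marginal costs; this in turn relies on the convexity of $\load\mapsto\load\,\cost_{\edge}(\load)$, which holds here since $\coeffa_{\edge}\geq 0$.
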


From this it follows directly that for affine costs the $\activ{\edges}$-breakpoints for the optimum are in one-to-one correspondence with the $\activ{\edges}$-breakpoints for the equilibrium, that is,
\begin{equation}
\label{eq:opt-m}
\opt{\rate}_{0}=\frac{1}{2}\rate_{0}.
\end{equation}

\citet[Theorem~3.5]{OHaConWat:TRB2016} established a similar scaling law when all the edge costs are $\BPR$ functions of the same degree.

We now describe the behavior of the equilibrium and the \acl{PoA}.
We first show that any given subset of edges $\activ{\edges}_{0}\subseteq\edges$ can only be an active network over an interval. In other words,
once a given active network changes it will never become active again. This implies that the number of $\activ{\edges}$-breakpoints is finite. 
Moreover, we  show that between $\activ{\edges}$-breakpoints the equilibrium cost is affine  with nonnegative slope and intersect.

\begin{proposition}
\label{pr:affine-flows}
Suppose that the costs $\cost_{\edge}(\argdot)$ are affine.
Let $\edges_{0}\subseteq\edges$ and suppose that $\rate^1<\rate^2$ are such that 
$\activ{\edges}(\rate^1)=\activ{\edges}(\rate^2)=\edges_{0}$. 
Then, for every $\rate\in[\rate^1,\rate^2]$ we have $\activ{\edges}(\rate)=\edges_{0}$ and
we can select an equilibrium flow $\flowprof(\rate)$ that is affine in $\rate$, so that the equilibrium cost $\eqcost(\rate)$ is also affine on $[\rate^1,\rate^2]$. 
More precisely $\eqcost(\rate)=\coeffalpha+\coeffbeta\rate$ with 
nonnegative coefficients $\coeffalpha\geq 0$ and $\coeffbeta\geq 0$.
\end{proposition}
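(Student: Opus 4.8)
The plan is to build, on the whole interval $[\rate^{1},\rate^{2}]$, an explicit equilibrium by linearly interpolating between equilibria at the two endpoints, and to read affineness and the signs off that construction. The first ingredient I would isolate is that \emph{the active network determines the entire set of shortest paths}. Every edge $\edge=(\vertex,\vertexalt)\in\activ{\edges}(\rate)$ lies on some shortest $\source$--$\sink$ path $\route$; combining the telescoping identity $\eqcost(\rate)=\sum_{\edgealt\in\route}\eqcostedge_{\edgealt}$ with the triangle inequalities $\eqcostvertex_{\vertexalt}\le\eqcostvertex_{\vertex}+\eqcostedge_{\edge}$ from \eqref{eq:BestResponse} forces equality $\eqcostvertex_{\vertexalt}=\eqcostvertex_{\vertex}+\eqcostedge_{\edge}$ at every edge of $\route$, i.e.\ each active edge is tight. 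Conversely, any $\source$--$\sink$ path using only tight edges has cost $\eqcostvertex_{\sink}=\eqcost(\rate)$ by the same telescoping, hence is shortest. Therefore $\routes(\rate)$ equals the set of \emph{all} $\source$--$\sink$ paths contained in $\activ{\edges}(\rate)$, and the hypothesis $\activ{\edges}(\rate^{1})=\activ{\edges}(\rate^{2})=\edges_{0}$ yields $\routes(\rate^{1})=\routes(\rate^{2})=:\routes_{0}$.

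Next I would interpolate. Fix equilibria $\eq{\flowprof}(\rate^{1}),\eq{\flowprof}(\rate^{2})$, both supported on $\routes_{0}$, and set for $t\in[0,1]$ the feasible flow $\flowprof(t)=(1-t)\,\eq{\flowprof}(\rate^{1})+t\,\eq{\flowprof}(\rate^{2})$ for demand $\rate(t)=(1-t)\rate^{1}+t\rate^{2}$. Because the costs are affine, the induced loads satisfy $\cost_{\edge}(\load_{\edge}(t))=(1-t)\cost_{\edge}(\eq{\load}_{\edge}(\rate^{1}))+t\,\cost_{\edge}(\eq{\load}_{\edge}(\rate^{2}))$ edgewise, and summing along a path gives $\cost_{\route}(\flowprof(t))=(1-t)\cost_{\route}(\eq{\flowprof}(\rate^{1}))+t\,\cost_{\route}(\eq{\flowprof}(\rate^{2}))$ for every $\route$. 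For $\route\in\routes_{0}$ the two endpoint costs are $\eqcost(\rate^{1})$ and $\eqcost(\rate^{2})$, so all paths of $\routes_{0}$ share the common value $\eqcost(t):=(1-t)\eqcost(\rate^{1})+t\,\eqcost(\rate^{2})$; for $\route\notin\routes_{0}$ each endpoint cost is at least the corresponding equilibrium cost, whence $\cost_{\route}(\flowprof(t))\ge\eqcost(t)$. This verifies \eqref{eq:Wardrop}, so $\flowprof(t)$ is an equilibrium, $\eqcost(\rate(t))=\eqcost(t)$ is affine in $\rate$, and so is $\flowprof$. Moreover a path $\route\notin\routes_{0}$ cannot be shortest at both endpoints (else $\route\in\routes_{0}$), so for $t\in(0,1)$ one endpoint inequality is strict and $\cost_{\route}(\flowprof(t))>\eqcost(t)$; with the first paragraph this gives $\routes(\rate(t))=\routes_{0}$ and hence $\activ{\edges}(\rate)=\edges_{0}$ throughout $[\rate^{1},\rate^{2}]$.

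It remains to sign $\eqcost(\rate)=\coeffalpha+\coeffbeta\rate$. The slope is nonnegative at once: by \cref{pr:V-C1} the map $\eqcost=\valueW'$ is nondecreasing, so $\coeffbeta\ge0$. The intercept is the delicate point, and this is where I expect the real work, since $\coeffalpha<0$ is a priori compatible with $\eqcost$ being nonnegative and nondecreasing on the interval. I would exploit the homogeneity $\loads_{\rate}=\rate\,\loads_{1}$: writing each load as $\loadprof=\rate\,\boldsymbol{w}$ with $\boldsymbol{w}\in\loads_{1}$,
\[
\valueW(\rate)=\min_{\boldsymbol{w}\in\loads_{1}}\Bigl(\tfrac{\rate^{2}}{2}\,Q(\boldsymbol{w})+\rate\,L(\boldsymbol{w})\Bigr),\qquad Q(\boldsymbol{w})=\sum_{\edge}\coeffa_{\edge}w_{\edge}^{2}\ge0,\quad L(\boldsymbol{w})=\sum_{\edge}\coeffb_{\edge}w_{\edge}\ge0.
\]
As a minimum of functions decreasing in $\rate$, $\valueW(\rate)/\rate^{2}=\min_{\boldsymbol{w}}\bigl(\tfrac12 Q(\boldsymbol{w})+L(\boldsymbol{w})/\rate\bigr)$ is nonincreasing; as a minimum of affine functions, $\valueW(\rate)/\rate=\min_{\boldsymbol{w}}\bigl(\tfrac{\rate}{2}Q(\boldsymbol{w})+L(\boldsymbol{w})\bigr)$ is concave, so its derivative $\tfrac12 Q(\boldsymbol{w}^{\ast}(\rate))$ is nonincreasing. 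Since $\valueW$ is $C^{1}$, the envelope identity gives $\valueW'(\rate)=\rate\,Q(\boldsymbol{w}^{\ast}(\rate))+L(\boldsymbol{w}^{\ast}(\rate))$, so $\eqcost(\rate)/\rate=\valueW(\rate)/\rate^{2}+\tfrac12 Q(\boldsymbol{w}^{\ast}(\rate))$ is a sum of two nonincreasing functions, hence nonincreasing on $(0,\infty)$. On $[\rate^{1},\rate^{2}]$ this ratio equals $\coeffbeta+\coeffalpha/\rate$, which is nonincreasing only if $\coeffalpha\ge0$, completing the argument. I would flag this last step as the main obstacle: the interpolation and the tightness lemma are essentially formal, whereas pinning down the sign of the intercept genuinely requires the scaling structure of the feasible set rather than mere monotonicity of $\eqcost$.
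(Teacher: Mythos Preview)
Your proof is correct. The interpolation argument in your first two paragraphs matches the paper's essentially verbatim; your opening observation that $\routes(\rate)$ consists precisely of the $\source$--$\sink$ paths contained in $\activ{\edges}(\rate)$ makes explicit a step the paper asserts without justification when it says ``the optimal paths are the same for $\rate^{1}$ and $\rate^{2}$''.

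Where you genuinely diverge is the proof that $\coeffalpha\geq 0$. The paper works locally with the specific interpolation: it rewrites $\flowprof(\rate)=\rate\,\wvec+\zvec$, introduces $\matrA=\epincid^{\top}\diaga\,\epincid$ and $\sumbvec=\epincid^{\top}\coeffbprof$, and exploits the identity $\langle\cost(\flowprof(\rate)),\zvec\rangle=0$ (valid because $\zvec$ is supported on $\routes_{0}$ with $\sum_{\route}\zvec_{\route}=0$) to derive the explicit formulas $\coeffbeta=\langle\matrA\wvec,\wvec\rangle$ and $\coeffalpha=\langle\sumbvec,\wvec\rangle$, together with $\langle\sumbvec,\zvec\rangle\leq 0$; then $0\leq\langle\flowprof(\rate),\sumbvec\rangle=\rate\langle\wvec,\sumbvec\rangle+\langle\zvec,\sumbvec\rangle$ forces $\coeffalpha\geq 0$. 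You instead establish the \emph{global} statement that $\eqcost(\rate)/\rate$ is nonincreasing on all of $(0,\infty)$, via the scaling $\loads_{\rate}=\rate\,\loads_{1}$ and the concavity of $\valueW(\rate)/\rate$, and then read off $\coeffalpha\geq 0$ from the affine piece. Your route is more conceptual and yields a strictly stronger conclusion than what is stated. The paper's route, in turn, produces the explicit expressions for $\coeffalpha,\coeffbeta$ in terms of $\wvec,\zvec,\matrA,\sumbvec$ that are reused verbatim in the next result (\cref{pr:SocialCostAffine}) to identify the constant $\coeffgamma_{\run}$ in the optimal social cost and show $\coeffgamma_{\run}\leq 0$; your scaling argument does not deliver those formulas, so that downstream step would need a separate treatment.
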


\begin{proof} Let  $\flowprof^1$ and $\flowprof^2$ be equilibrium profiles for $\rate^1$ and $\rate^2$,
and consider the following affine interpolation with $\alphaint=\alphaint(\rate)\coloneqq(\rate-\rate^1)/(\rate^2-\rate^1)\in[0,1]$  
\begin{equation}\label{eq:afineflow}
\flowprof(\rate)=(1-\alphaint)\flowprof^1+\alphaint\flowprof^2\in\flows_{\!\rate}.
\end{equation}
The corresponding load profile is given by $\loadprof(\rate)=(1-\alphaint)\loadprof^1+\alphaint\loadprof^2$ and, since the costs are affine, it follows that $\cost_{\edge}(\load_{\edge}(\rate))=(1-\alphaint)\cost_{\edge}(\load_{\edge}^{1})+\alphaint\cost_{\edge}(\load_{\edge}^{2})$.
Hence, the same affine behavior holds for the path costs
$\cost_{\route}(\flowprof(\rate))=(1-\alphaint)\cost_{\route}(\flowprof^{1})+\alphaint\cost_{\route}(\flowprof^{2})$.
Now, since $\activ{\edges}(\rate^1)=\activ{\edges}(\rate^2)=\edges_{0}$, the optimal paths are the same for $\rate^1$ and $\rate^2$.  Thus, if $\route$ is
an optimal path and $\routealt$ is not optimal, we have 
\begin{align*}
\cost_{\route}(\flowprof^1)&=\eqcost(\rate^1)<\cost_{\routealt}(\flowprof^1)\\
\cost_{\route}(\flowprof^2)&=\eqcost(\rate^2)<\cost_{\routealt}(\flowprof^2)
\end{align*}
and taking a convex combination of these inequalities we get
\begin{equation*}
\cost_{\route}(\flowprof(\rate))=(1-\alphaint)\eqcost(\rate^1)+\alphaint \eqcost(\rate^2)<\cost_{\routealt}(\flowprof(\rate)).
\end{equation*}
This implies that the paths $\route$ and $\routealt$ remain respectively optimal and nonoptimal  for $\rate$.
It follows that $\activ{\edges}(\rate)=\edges_{0}$ and also that $\flowprof(\rate)$ is an equilibrium with 
$\eqcost(\rate)=(1-\alphaint)\eqcost(\rate^1)+\alphaint \eqcost(\rate^2)$.

This shows that the equilibrium cost is affine over the interval $[\rate^1,\rate^2]$, that is, $\eqcost(\rate)=\coeffalpha+\coeffbeta\rate$. 
From \cref{pr:V-C1} we know that $\eqcost(\rate)$ 
is nondecreasing so that $\coeffbeta\geq 0$, and therefore it remains to show that $\coeffalpha\geq 0$. 
We rewrite the interpolated flow as $\flowprof(\rate)=\rate \wvec +\zvec$ with
\begin{align}
\label{eq:w}
\wvec&=\frac{\flowprof^2-\flowprof^1}{\rate^2-\rate^1},\\
\label{eq:z}
\zvec&=\frac{\rate^2\flowprof^1-\rate^1\flowprof^2}{\rate^2-\rate^1}.
\end{align}
Let $\routes_{0}$  be the set of all the $\source$-$\sink$ shortest paths included in $\edges_{0}$.
We observe that $\zvec_{\route}=0$ for every $\route\not\in\routes_{0}$, and also that $\sum_{\route\in\routes}\zvec_{\edge}=0$.
On the other hand, since $\flowprof(\rate)$ is an equilibrium we have $\cost_{\route}(\flowprof(\rate))=\eqcost(\rate)$ for all $\route\in\routes_{0}$,
so that $\langle c(\flowprof(\rate)),\zvec\rangle=0$ and, as a consequence,
\begin{equation}
\label{eq:mu-lambda}
\rate\eqcost(\rate)=\braket*{\cost(\flowprof(\rate))}{\flowprof(\rate)}
=\braket*{\cost(\flowprof(\rate))}{\rate\wvec+\zvec}
=\rate\braket*{\cost(\flowprof(\rate))}{\wvec}.
\end{equation}
Defining 
\begin{equation}
\label{eq:A-beta}
\matrA=\epincid^{\top}\diaga\epincid \quad\text{and}\quad \sumbvec=\epincid^{\top}\coeffbprof,
\end{equation} 
with $\epincid$ the edge-path incidence matrix, $\diaga=\diag[(\coeffa_{\edge})_{\edge\in\edges}]$,
and $\coeffbprof=(\coeffb_{\edge})_{\edge\in\edges}$, 
the vector of path costs can be expressed as 
\begin{equation}
\label{eq:path-cost-vec}
\cost(\flowprof(\rate))=\matrA\flowprof(\rate)+\sumbvec=\rate\matrA\wvec+\matrA\zvec+\sumbvec,
\end{equation}
so that 
\begin{equation}
\label{eq:lambda(mu)}
\eqcost(\rate)=\braket*{\cost(\flowprof(\rate))}{\wvec}
= \rate\braket*{\matrA \wvec}{\wvec}+\braket*{\matrA \zvec\!+\!\sumbvec}{\wvec},
\end{equation}
which yields $\coeffbeta=\braket*{\matrA \wvec}{\wvec}$ and $\coeffalpha=\braket*{\matrA\zvec\!+\!\sumbvec}{\wvec}$.
Since $\matrA$ is positive semidefinite, we get again $\coeffbeta\geq 0$. 
Now, since $0=\langle \cost(\flowprof(\rate)),\zvec\rangle=\rate\langle\matrA\wvec,\zvec\rangle+\langle\matrA\zvec+\sumbvec,\zvec\rangle$ 
for all $\rate$, it follows that $\langle\matrA\wvec,\zvec\rangle=0$ so that  $\coeffalpha=\braket*{\sumbvec}{\wvec}$,
and also $\langle\matrA\zvec+\sumbvec,\zvec\rangle=0$ so that $\langle\sumbvec,\zvec\rangle=-\langle\matrA\zvec,\zvec\rangle\leq 0$.
To conclude, we note that all the entries in $\flowprof(\rate)$ and in $\sumbvec$ are nonnegative, and therefore
\begin{equation}
\label{eq:f-beta}
0\leq\braket*{\flowprof(\rate)}{\sumbvec}=\rate\braket*{\wvec}{\sumbvec}+\braket*{\zvec}{\sumbvec},
\end{equation}
from which we deduce that $\coeffalpha=\braket*{\wvec}{\sumbvec}\geq 0$ as claimed.
\end{proof}

\begin{remark}
The paper by \citet{KliWar:MORfrth} developed a homotopy method for computing the full path of Wardrop equilibria as a function of the traffic demand. 
The method is designed to work with piecewise linear costs and produces a piecewise linear path of equilibrium loads. 
The algorithm first determines the equilibrium costs and then recovers the loads by inverting the link costs. 
This requires the costs to be  strictly increasing.  
In contrast, we work directly in the space of flows so we can handle nondecreasing and constant costs. 
However, we restrict to affine costs, which is essential for \cref{pr:affine-flows}. 
Indeed, beyond the piecewise affine character of the equilibrium, the most relevant part of this result is the identification of the breakpoints as the demand levels at which the active network changes. 
As shown in \cref{pr:affine-flows}, each particular subnetwork can be active 
on a demand interval, and, on each of these intervals, the equilibrium varies linearly. 
Once an active network is abandoned, it will never occur again at higher demand levels. 
This property fails to hold for nonlinear or even piecewise linear costs, as shown in \cref{pr:RepeatingActiveRegimes}.
\end{remark}

\citet[Theorem~10]{KliWar:MORfrth} showed that, even for affine costs, the number of $\activ{\edges}$-breakpoints can be exponential in the number of paths.
In \cref{se:examples} we will use their example to show an interesting behavior of the \acl{PoA}.


\subsection{Behavior of the Price of Anarchy}

We now prove that the social cost at the equilibrium and at the optimum have a very similar quadratic form,
from which we deduce that between $\activ{\edges}$-breakpoints the function $\rate\mapsto\PoA(\rate)$ has a unique minimum and
its maximum must be attained at some of the $\activ{\edges}$-breakpoints.

\begin{proposition}
\label{pr:SocialCostAffine}
Let $\rate_{\run}$ and $\rate_{\run+1}$ be two consecutive  $\activ{\edges}$-breakpoints for the equilibrium. 
Then, there exist $\coeffalpha_{\run}\geq 0$, $\coeffbeta_{\run}\ge 0$, and $\coeffgamma_{\run}\le 0$,
such that
\begin{align}
\label{eq:SC-eq-mu}
\SC(\eq{\flowprof}(\rate))&=\coeffalpha_{\run}\rate+\coeffbeta_{\run}\rate^{2} \quad\text{when }\rate\in(\rate_{\run},\rate_{\run+1}),\\ 
\label{eq:SC-opt-mu}
\SC(\opt{\flow}(\rate))&=\coeffgamma_{\run}+\coeffalpha_{\run}\rate+\coeffbeta_{\run}\rate^{2} \quad\text{ when }2\rate\in (\rate_{\run},\rate_{\run+1}),\text{ i.e., }\rate\in (\opt{\rate}_{\run},\opt{\rate}_{\run+1}).
\end{align}
\end{proposition}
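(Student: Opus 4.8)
The plan is to read both formulas off the affine structure of the equilibrium from \cref{pr:affine-flows}, combined with the scaling law of \cref{pr:scaling-law}, while recycling the quadratic-form identities already derived inside the proof of \cref{pr:affine-flows}. Throughout I would fix the interval $(\rate_{\run},\rate_{\run+1})$ between two consecutive $\activ{\edges}$-breakpoints; since the active network is constant there, \cref{pr:affine-flows} supplies an affine equilibrium flow $\eq{\flowprof}(\rate)=\rate\wvec+\zvec$ with affine cost $\eqcost(\rate)=\coeffalpha_{\run}+\coeffbeta_{\run}\rate$, where (from \eqref{eq:lambda(mu)}) $\coeffbeta_{\run}=\braket*{\matrA\wvec}{\wvec}\ge 0$ and $\coeffalpha_{\run}=\braket*{\sumbvec}{\wvec}\ge 0$, with $\matrA,\sumbvec$ as in \eqref{eq:A-beta}.

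The equilibrium formula \eqref{eq:SC-eq-mu} is then immediate: from \eqref{eq:same-cost} we have $\SC(\eq{\flowprof}(\rate))=\rate\,\eqcost(\rate)=\coeffalpha_{\run}\rate+\coeffbeta_{\run}\rate^{2}$ on $(\rate_{\run},\rate_{\run+1})$, with the required signs $\coeffalpha_{\run},\coeffbeta_{\run}\ge 0$.

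The substantive step is \eqref{eq:SC-opt-mu}. Here I would apply the scaling law $\opt{\flowprof}(\rate)=\tfrac12\eq{\flowprof}(2\rate)$: via the breakpoint correspondence \eqref{eq:opt-m}, the condition $2\rate\in(\rate_{\run},\rate_{\run+1})$ places $\eq{\flowprof}(2\rate)$ in the affine regime, so $\opt{\flowprof}(\rate)=\tfrac12(2\rate\,\wvec+\zvec)=\rate\wvec+\tfrac12\zvec$ — crucially, the \emph{same} affine data $\wvec,\zvec$ as the equilibrium, only with half the intercept. Writing the social cost in matrix form $\SC(\flowprof)=\braket*{\matrA\flowprof+\sumbvec}{\flowprof}$ and substituting $\flowprof=\rate\wvec+\tfrac12\zvec$, I would expand the quadratic and linear parts and let the cross terms collapse using the identities established in the proof of \cref{pr:affine-flows}, namely $\braket*{\matrA\wvec}{\zvec}=0$ and $\braket*{\sumbvec}{\zvec}=-\braket*{\matrA\zvec}{\zvec}$, together with $\braket*{\matrA\wvec}{\wvec}=\coeffbeta_{\run}$ and $\braket*{\sumbvec}{\wvec}=\coeffalpha_{\run}$. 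The quadratic part yields $\coeffbeta_{\run}\rate^{2}+\tfrac14\braket*{\matrA\zvec}{\zvec}$ and the linear part yields $\coeffalpha_{\run}\rate-\tfrac12\braket*{\matrA\zvec}{\zvec}$, so that $\SC(\opt{\flowprof}(\rate))=\coeffbeta_{\run}\rate^{2}+\coeffalpha_{\run}\rate-\tfrac14\braket*{\matrA\zvec}{\zvec}$, giving the same $\coeffalpha_{\run},\coeffbeta_{\run}$ and identifying $\coeffgamma_{\run}=-\tfrac14\braket*{\matrA\zvec}{\zvec}$.

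To close, I would record the sign of $\coeffgamma_{\run}$: since $\matrA=\epincid^{\top}\diaga\epincid$ is positive semidefinite, $\braket*{\matrA\zvec}{\zvec}\ge 0$, hence $\coeffgamma_{\run}\le 0$. I do not expect a genuine obstacle beyond careful bookkeeping; the one point requiring attention is that the equilibrium on $(\rate_{\run},\rate_{\run+1})$ and the optimum on $(\opt{\rate}_{\run},\opt{\rate}_{\run+1})$ share the affine vectors $\wvec$ and $\zvec$, and this is precisely what the scaling law delivers through the correspondence \eqref{eq:opt-m}. Everything else is a controlled expansion that reuses the orthogonality relations already proved for \cref{pr:affine-flows}.
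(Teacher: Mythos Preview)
Your proposal is correct and follows essentially the same route as the paper: both derive \eqref{eq:SC-eq-mu} from $\SC(\eq{\flowprof}(\rate))=\rate\,\eqcost(\rate)$ together with \cref{pr:affine-flows}, and both obtain \eqref{eq:SC-opt-mu} by applying the scaling law to write $\opt{\flowprof}(\rate)=\rate\wvec+\tfrac12\zvec$ and expanding $\SC(\flowprof)=\braket*{\matrA\flowprof+\sumbvec}{\flowprof}$ using the identities $\braket*{\matrA\wvec}{\zvec}=0$ and $\braket*{\sumbvec}{\zvec}=-\braket*{\matrA\zvec}{\zvec}$ from the proof of \cref{pr:affine-flows}. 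The only cosmetic difference is that the paper records $\coeffgamma_{\run}=\tfrac14\braket*{\matrA\zvec}{\zvec}+\tfrac12\braket*{\sumbvec}{\zvec}$ before simplifying, while you substitute the identity immediately to get $\coeffgamma_{\run}=-\tfrac14\braket*{\matrA\zvec}{\zvec}$; the sign argument is then the same.
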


\begin{proof}
Since $\SC(\eq{\flowprof}(\rate))=\rate\,\eqcost(\rate)$, the equality \eqref{eq:SC-eq-mu} follows directly from \cref{pr:affine-flows}
with $\coeffalpha_{\run}=\braket*{\matrA\zvec\!+\!\sumbvec}{\wvec}$ and $\coeffbeta_{\run}=\braket*{\matrA \wvec}{\wvec}$,
where $\wvec, \zvec, \matrA, \sumbvec$ are defined as in \cref{eq:w,eq:z,eq:A-beta}.

In order to prove \eqref{eq:SC-opt-mu}, let $\flowprof(\rate)=\rate \wvec +\zvec$ be the affine interpolated equilibria as in the proof of \cref{pr:affine-flows}.
When $2\rate\in(\rate_{\run},\rate_{\run+1})$, \cref{pr:scaling-law} implies that an optimum flow is 
\begin{equation*}
\opt{\flowprof}(\rate)=\text{$\frac{1}{2}$}\flowprof(2\rate)=\rate\,\wvec +\text{$\frac{1}{2}$}\zvec,
\end{equation*}
Then the social cost at optimum will be a quadratic function in $\rate$ with  
the same linear coefficient ($\coeffalpha_{\run}$) and quadratic coefficient ($\coeffbeta_{\run}$)
of the social cost at equilibrium, and constant coefficient
\begin{equation*}
\coeffgamma_{\run}=\frac 14\langle \matrA\zvec,\zvec\rangle+ \frac 12\langle \sumbvec,\zvec\rangle,
\end{equation*}
which is less or equal to $\langle \matrA\zvec,\zvec\rangle+ \langle \sumbvec,\zvec\rangle=0$ since $\langle \matrA\zvec,\zvec\rangle\ge 0$ and $\langle \sumbvec,\zvec\rangle\le 0$.
\end{proof}

From this result it follows that the \ac{PoA} between $\activ{\edges}$-breakpoints is a quotient of quadratics.
Moreover, the specific signs of the coefficients of these quadratics imply that  \ac{PoA} has a unique minimum between  
breakpoints and that its local maxima can only occur at these breakpoints.

\begin{theorem}
\label{pr:UnimodalityAffine}
Let $\rate_{\run}$ and $\rate_{\run+1}$ be two consecutive  $\activ{\edges}$-breakpoints for the equilibrium. 
Then on the interval $(\rate_{\run},\rate_{\run+1})$ the function $\rate\mapsto\PoA(\rate)$ is smooth and it is 
either decreasing, or increasing, or  first decreasing and then increasing with a local minimum in the interior of the interval. In particular $\PoA(\rate)$ does not attain a local maximum 
on $(\rate_{\run},\rate_{\run+1})$.
\end{theorem}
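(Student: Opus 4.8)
The plan is to write $\PoA(\rate)=P(\rate)/Q(\rate)$ with numerator $P(\rate)\coloneqq\SC(\eq{\flowprof}(\rate))$ and denominator $Q(\rate)\coloneqq\opt{\valueW}(\rate)$, and to deduce the shape of the quotient purely from the sign pattern of its derivative. By \cref{pr:affine-flows} the equilibrium cost $\eqcost$ is affine on the whole interval $(\rate_{\run},\rate_{\run+1})$, so $P$ is a single quadratic through the origin, $P(\rate)=\coeffbeta_{\run}\rate^{2}+\coeffalpha_{\run}\rate$ with $\coeffalpha_{\run},\coeffbeta_{\run}\ge 0$. The denominator $Q=\opt{\valueW}$ is $C^{1}$ and convex by \cref{pr:PoA-differentiable}, and by \cref{pr:SocialCostAffine} it is piecewise quadratic: on each optimum interval $(\opt{\rate}_{\jrun},\opt{\rate}_{\jrun+1})$ it has the form $Q(\rate)=\coeffbeta_{\jrun}\rate^{2}+\coeffalpha_{\jrun}\rate+\coeffgamma_{\jrun}$ with $\coeffgamma_{\jrun}\le 0$. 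Since the optimal social cost is strictly positive for $\rate>0$, the quotient $\PoA=P/Q$ is $C^{1}$ on $(\rate_{\run},\rate_{\run+1})$, which already gives the asserted smoothness.

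Next I would study the sign of $\PoA'=(P'Q-PQ')/Q^{2}$, which is the sign of $N\coloneqq P'Q-PQ'$. On a quadratic piece of $Q$ a direct expansion (using that $P$ has \emph{zero constant term}) gives
\begin{equation*}
N(\rate)=(\coeffbeta_{\run}\coeffalpha_{\jrun}-\coeffalpha_{\run}\coeffbeta_{\jrun})\rate^{2}+2\coeffbeta_{\run}\coeffgamma_{\jrun}\,\rate+\coeffalpha_{\run}\coeffgamma_{\jrun}.
\end{equation*}
The key observation is that $N/\rate^{2}$ is nondecreasing in $\rate$: differentiating,
\begin{equation*}
\frac{\diff}{\diff\rate}\!\left(\frac{N(\rate)}{\rate^{2}}\right)=-2\coeffgamma_{\jrun}\,\rate^{-3}\,(\coeffbeta_{\run}\rate+\coeffalpha_{\run})\ge 0,
\end{equation*}
because $\coeffgamma_{\jrun}\le 0$ while $\coeffalpha_{\run},\coeffbeta_{\run}\ge 0$. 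Consequently, within each quadratic piece $N$, and hence $\PoA'$, changes sign \emph{at most once and only from negative to positive}; it can never pass from positive to negative.

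Finally I would globalize this across the full interval using the $C^{1}$-gluing of $Q$. Suppose, for contradiction, that some $\rate^{*}\in(\rate_{\run},\rate_{\run+1})$ were an interior local maximum. Then $\PoA'\ge 0$ immediately to its left and $\PoA'\le 0$ immediately to its right, and by continuity $\PoA'(\rate^{*})=0$, i.e. $N(\rate^{*})=0$. But on the quadratic piece lying to the right of $\rate^{*}$, the monotonicity of $N/\rate^{2}$ together with $N(\rate^{*})=0$ forces $N(\rate)>0$ for $\rate$ slightly larger than $\rate^{*}$, so $\PoA'>0$ there, contradicting $\PoA'\le 0$. Hence no interior local maximum exists. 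The same monotonicity shows that once $\PoA'$ is positive it stays nonnegative, so there is a threshold $\rate^{\dagger}$ with $\PoA'\le 0$ on $(\rate_{\run},\rate^{\dagger})$ and $\PoA'\ge 0$ on $(\rate^{\dagger},\rate_{\run+1})$; the three cases $\rate^{\dagger}=\rate_{\run+1}$ (decreasing), $\rate^{\dagger}=\rate_{\run}$ (increasing), and $\rate^{\dagger}$ interior (decreasing, then increasing with a unique interior minimum) are exactly the claimed alternatives.

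The main obstacle is precisely that $Q$ need not be a single quadratic on $(\rate_{\run},\rate_{\run+1})$: by \eqref{eq:opt-m} the optimum breakpoints sit at $\opt{\rate}_{\jrun}=\rate_{\jrun}/2$ and may fall strictly inside the equilibrium interval, so $\PoA$ is merely $C^{1}$ across them rather than one fixed rational function. The crux is therefore to combine the per-piece monotonicity of $N/\rate^{2}$ with the continuity of $\PoA'$ to exclude a positive-to-negative crossing globally. The sign conditions $\coeffalpha_{\run},\coeffbeta_{\run}\ge 0$ and $\coeffgamma_{\jrun}\le 0$ from \cref{pr:SocialCostAffine}, and—crucially—the vanishing of the constant term of $P$, are exactly what render $N/\rate^{2}$ monotone; dropping any of them would permit an interior maximum.
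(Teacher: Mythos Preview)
Your proof is correct and follows essentially the same route as the paper's: both write $\PoA=P/Q$, compute $N=P'Q-PQ'$ as a quadratic on each optimum piece with nonpositive constant and linear coefficients, conclude that $\PoA'$ can flip sign at most once (and only from negative to positive), and glue the pieces using the $C^{1}$-continuity of $\PoA'$. Your device of tracking $N/\rate^{2}$ is a mild repackaging of the paper's direct reading of the coefficient signs; it has the pleasant side effect that, since $N$ is continuous ($Q$ being $C^{1}$) and $N/\rate^{2}$ is piecewise nondecreasing, $N/\rate^{2}$ is \emph{globally} nondecreasing on $(\rate_{\run},\rate_{\run+1})$---this already yields ``$\PoA'\le 0$ then $\PoA'\ge 0$'' on the whole interval and makes your separate contradiction step unnecessary. (That step has a small wrinkle anyway: monotonicity of $N/\rate^{2}$ gives only $N\ge 0$, not $N>0$, to the right of $\rate^{*}$, and a local maximum of a $C^{1}$ function need not have $\PoA'\le 0$ on an entire right neighborhood; going through the global monotonicity of $N/\rate^{2}$ sidesteps both issues.)
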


\begin{proof}
From \cref{pr:PoA-differentiable} the optimal social cost is a $C^1$ function of $\rate$, so that
\cref{eq:SC-eq-mu} implies that $\PoA(\rate)$ is smooth over the full interval $(\rate_{\run},\rate_{\run+1})$. 
Consider first the case when there is no $\activ{\edges}$-breakpoint $\opt{\rate}_{\ell}$ for the optimum in this interval.
Using \cref{eq:SC-eq-mu,eq:SC-opt-mu} we can express the \ac{PoA} in the form
\begin{equation*}
\PoA(\rate)
=\frac{\SC(\eq{\flowprof}(\rate))}{\SC(\opt{\flow}(\rate))}
=\frac{\coeffalpha\rate+\coeffbeta\rate^{2}}{\coeffgamma+\coeffdelta\rate+\coeffeta\rate^{2}},
\end{equation*}
with $\coeffalpha,\coeffbeta,\coeffdelta,\coeffeta,\ge 0$ and $\coeffgamma\le 0$.
The derivative is given by
\begin{equation*}
\PoA'(\rate)=\frac 1{\SC(\opt{\flow}(\rate))^{2}}\cdot\bracks*{ 
\coeffalpha\coeffgamma+(2\coeffbeta\coeffgamma)\rate+(\coeffbeta\coeffdelta-\coeffalpha\coeffeta)\rate^{2}}
\end{equation*}
and, since $\coeffalpha\coeffgamma\le 0$ and $2\coeffbeta\coeffgamma\le 0$, it can have 
at most one positive zero, and only if $\coeffbeta\coeffdelta-\coeffalpha\coeffeta>0$. 
Hence, either $\PoA'(\rate)$ has a constant sign over  $(\rate_{\run},\rate_{\run+1})$, or it  
changes from negative to positive if the zero lies on $(\rate_{\run},\rate_{\run+1})$, in which case 
we have a local minimum at this zero.

If the optimum has $\activ{\edges}$-breakpoints in $(\rate_{\run},\rate_{\run+1})$, 
we can repeat the argument on each subinterval, noting that $\PoA(\rate)$ is $C^1$
so that the sign $\PoA'(\rate)$ does not change at these $\activ{\edges}$-breakpoints.   
\end{proof}

\cref{pr:UnimodalityAffine} shows that the typical profile of the Price of Anarchy for networks with affine 
costs is similar to the one shown in the example \cref{fi:oscillations-b}. In particular, it implies:
\begin{corollary}
For networks with affine costs the maximum of the Price of Anarchy is attained at some 
$\activ{\edges}$-breakpoint.
\end{corollary}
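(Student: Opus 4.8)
The plan is to combine the purely local statement of \cref{pr:UnimodalityAffine} with the global asymptotic behaviour of the \ac{PoA} and with the finiteness of the breakpoint set. By \cref{pr:affine-flows} there are only finitely many $\activ{\edges}$-breakpoints, say $0<\rate_{1}<\dots<\rate_{M}$; together with the endpoints $0$ and $+\infty$ they partition $(0,\infty)$ into finitely many maximal intervals on each of which the active network $\activ{\edges}(\rate)$ is constant, namely the bounded intervals $(\rate_{\run},\rate_{\run+1})$ and the two unbounded end intervals $(0,\rate_{1})$ and $(\rate_{M},\infty)$. On every such interval the reasoning in the proof of \cref{pr:UnimodalityAffine} applies, since it uses only the constancy of the active network through the quadratic expressions of \cref{pr:SocialCostAffine}; consequently $\PoA(\argdot)$ is smooth and admits no interior local maximum on any of these intervals.

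Next I would establish that the global maximum of $\PoA(\argdot)$ is actually attained. By \cref{pr:PoA-differentiable} the map $\rate\mapsto\PoA(\rate)$ is continuous on $(0,\infty)$, and it satisfies $\PoA(\rate)\ge 1$ with the uniform bound $\PoA(\rate)\le 4/3$ of \citet{RouTar:JACM2002}. Moreover, for affine costs the \ac{PoA} tends to $1$ in both light and heavy traffic, that is, $\PoA(\rate)\to 1$ as $\rate\to 0^{+}$ and as $\rate\to\infty$ \citep{ColComMerSca:OR2020,WuMohCheXu:OR2021}. Writing $S=\sup_{\rate>0}\PoA(\rate)$, if $S=1$ then $\PoA\equiv 1$ and the claim is trivial; otherwise I pick $\eps>0$ with $S-\eps>1$, and the asymptotics furnish $0<a<b$ with $\PoA(\rate)<S-\eps$ for $\rate\notin[a,b]$. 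Continuity of $\PoA$ on the compact set $[a,b]$ then yields a point $\rate^{\ast}\in[a,b]$ with $\PoA(\rate^{\ast})=S$.

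Finally I would argue that any such maximizer $\rate^{\ast}$ is a breakpoint. Indeed, if $\rate^{\ast}$ were not an $\activ{\edges}$-breakpoint it would lie in the open interior of one of the constancy intervals above. Being a global maximizer, $\rate^{\ast}$ is in particular a local maximizer of the restriction of $\PoA$ to that interval, which contradicts the fact recorded in the first paragraph that $\PoA$ has no interior local maximum there. Hence $\rate^{\ast}$ coincides with one of the $\rate_{\run}$, and the corollary follows.

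The main obstacle I anticipate is bookkeeping at the two ends rather than anything deep: \cref{pr:UnimodalityAffine} is phrased for an interval bounded by two consecutive breakpoints, so I must verify that its proof — which really needs only the constancy of the active network — carries over to the semi-infinite intervals $(0,\rate_{1})$ and $(\rate_{M},\infty)$, and I must separately dispose of the degenerate situation in which there are no breakpoints at all, where the no-interior-maximum property together with the light and heavy traffic limits forces $\PoA\equiv 1$. Securing attainment of the supremum, by using the $\PoA\to 1$ asymptotics to confine all near-maximal values to a compact window, is the one genuinely global ingredient not already supplied by the preceding results.
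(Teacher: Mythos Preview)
Your proposal is correct and considerably more careful than the paper, which presents this corollary with no explicit proof, as an immediate consequence of \cref{pr:UnimodalityAffine}. The one substantive ingredient you add---using the light- and heavy-traffic limits together with compactness to ensure that the supremum of $\PoA$ is actually \emph{attained}---is precisely what the paper leaves implicit. Your concern about the two semi-infinite end intervals is easily dispatched: the proofs of \cref{pr:affine-flows}, \cref{pr:SocialCostAffine} and \cref{pr:UnimodalityAffine} rely only on the constancy of $\activ{\edges}(\rate)$ over the interval in question, not on it being bounded by two breakpoints, so they apply verbatim to $(0,\rate_{1})$ and $(\rate_{M},\infty)$. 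As for the degenerate no-breakpoint case, your deduction that then $\PoA\equiv 1$ is correct; strictly speaking the corollary is vacuous in that situation (there is no breakpoint for the maximum to sit at), but this is a harmless edge case of the \emph{statement} rather than a gap in your argument, and the paper does not address it either.
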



\section{Examples and counterexamples}
\label{se:examples}

In this section we present a set of examples that illustrate the results of the previous sections.

The first example shows the difference between the set of $\activ{\edges}$-breakpoints and the  
demands at which the set of  paths used at equilibrium changes.

\begin{example}
\label{re:break-points}
Given a selection of equilibrium flows $\eq{\flowprof}(\rate)$, the values $\rate$ at which the set of used paths changes may be different from the $\activ{\edges}$-breakpoints. 
For instance, in the network of \cref{fi:Braess-plus-direct}, 
for all $\rate\geq 2$ and for any choice of $\coeffeta\in[0,1]$ the following is an equilibrium 
\begin{center}
\begin{tabular}{c|c|c|c|c}
path & $\{\source,\vertex_{1},\vertex_{2},\sink\}$ & $\{\source,\vertex_{1},\sink\}$ & $\{\source,\vertex_{2},\sink\}$ & $\{\source,\sink \}$\\
\hline
flow & $1-\coeffeta$ & $\coeffeta$ & $\coeffeta$ & $\rate-1-\coeffeta$
\end{tabular}.
\end{center}
By letting $\coeffeta$  oscillate between $0$ and $1$ arbitrarily often, the set of used paths might
change an arbitrary number of times, whereas the active network is always the set of 
edges in all four paths, each of them with an equilibrium cost equal to $2$.

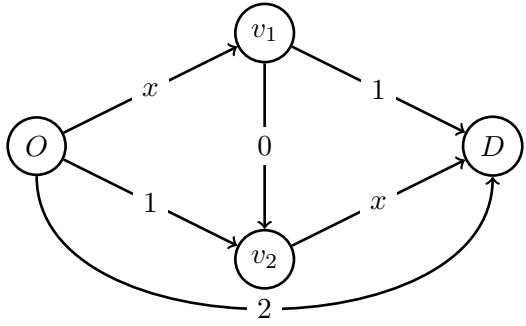
\begin{figure}[ht]

\begin{center}
\begin{tikzpicture}
   \node[shape=circle,draw=black,line width=1pt,minimum size=0.5cm] (v1) at (-3,0)  { $\source$}; 
   \node[shape=circle,draw=black,line width=1pt,minimum size=0.5cm] (v2) at (0,1.5)  {$\vertex_{1}$}; 

   \node[shape=circle,draw=black,line width=1pt,minimum size=0.5cm] (v5) at (0,-1.5)  {$\vertex_{2}$}; 
   \node[shape=circle,draw=black,line width=1pt,minimum size=0.5cm] (v6) at (3,0)  {$\sink$}; 
    
   \draw[line width=1pt,->] (v1) to   node[midway,fill=white] {$\load$} (v2);
   \draw[line width=1pt,->] (v1) to   node[midway,fill=white] {$1$} (v5);
   \draw[line width=1pt,->] (v2) to   node[midway,fill=white] {$0$} (v5);
   
   \draw[line width=1pt,->] (v2) to   node[midway,fill=white] {$1$} (v6);

   \draw[line width=1pt,->] (v5) to   node[midway,fill=white] {$\load$} (v6);
   \draw[line width=1pt,->] (v1) to [bend right=90]  node[midway,fill=white] {$2$} (v6);     
\end{tikzpicture}
\end{center}
\caption{The set of  paths used in equilibrium may change arbitrarily often. 
\label{fi:Braess-plus-direct}
}

\end{figure} 

\end{example}

Our second example shows that in general the set of nonregular demands may be strictly 
larger than the set of $\activ{\edges}$-breakpoints.

\begin{example}
\label{ex:WN}
Consider again \cref{fi:Braess-plus-direct} with the cost $2$ in the lower link replaced by $2+x$,
with $\activ{\edges}$-breakpoints at $\rate=1$ and $\rate=2$.
Note that all the demands $\rate<1$ and $\rate>2$ are regular.
However, for $\rate\in(1,2)$ the active network $\activ{\edges}(\rate)$ is constant and 
comprises all the links, so that there are no $\activ{\edges}$-breakpoints, while the unique equilibrium sends a 
zero flow on the lower link and hence $\rate$ is not regular.
It is worth noting that, although the loss of regularity in the interval $(1,2)$ precludes the use of \cref{pr:EqCost_Differentiable}, 
in this case the costs are affine so that the equilibrium flows are piecewise affine and differentiable 
except at the $\activ{\edges}$-breakpoints $\rate=1$ and $\rate=2$.
\end{example}

The next example deals with the fact that the \ac{PoA} can attain the value $1$ several times.

\begin{example}
\label{ex:nested-Wheatstone}
In  the example of \cref{fi:oscillations}, the \acl{PoA} shows an initial phase in which it is identically equal to $1$, after which it oscillates and eventually goes back to 1 but only asymptotically.
We will use an example taken from \citet[Section~6.1.2]{KliWar:MORfrth} to show that the \ac{PoA} can oscillate and go back to $1$ more than once. 
The idea is to nest several Wheatstone networks and choose constant costs that increase exponentially as we go from the inner to the outer networks.

\begin{figure}[h]
\setcounter{subfigure}{0}
\subfigure[Two nested Wheatstone networks]
{
\begin{tikzpicture}[scale=0.7]
    \node[shape=circle,draw=black,line width=1pt] (v1) at (-4,0)  {\small $\source$}; 
   \node[shape=circle,draw=black,line width=1pt] (v2) at (0,2.6)  {\small $\vertex_{1}$}; 
   \node[shape=circle,draw=black,line width=1pt] (v3) at (-1.33,0)  {\small $\vertex_{3}$}; 
   \node[shape=circle,draw=black,line width=1pt] (v4) at (1.33,0)  {\small $\vertex_{2}$}; 
   \node[shape=circle,draw=black,line width=1pt] (v5) at (0,-2.6)  {\small $\vertex_{4}$}; 
   \node[shape=circle,draw=black,line width=1pt] (v6) at (4,0)  {\small $\sink$}; 
   \draw[line width=1pt,->] (v1) to   node[midway,fill=white] {$\load$} (v2);
   \draw[line width=1pt,->] (v1) to   node[midway,fill=white] {$10$} (v5);
   \draw[line width=1pt,->] (v2) to   node[midway,fill=white] {$1$} (v3);
   \draw[line width=1pt,->] (v2) to   node[midway,fill=white] {$\load$} (v4);
   \draw[line width=1pt,->] (v2) to   node[midway,fill=white] {$10$} (v6);
   \draw[line width=1pt,->] (v3) to   node[midway,fill=white] {$\load$} (v5);
   \draw[line width=1pt,->] (v4) to   node[midway,fill=white] {$0$} (v3);
   \draw[line width=1pt,->] (v4) to   node[midway,fill=white] {$1$} (v5);
   \draw[line width=1pt,->] (v5) to   node[midway,fill=white] {$\load$} (v6);
\end{tikzpicture}
\label{fi:2-nested-Wheatstone-extended-graph}
}
\hspace{1cm}
\subfigure[Corresponding \ac{PoA}]
{
\includegraphics[width=0.4\textwidth]{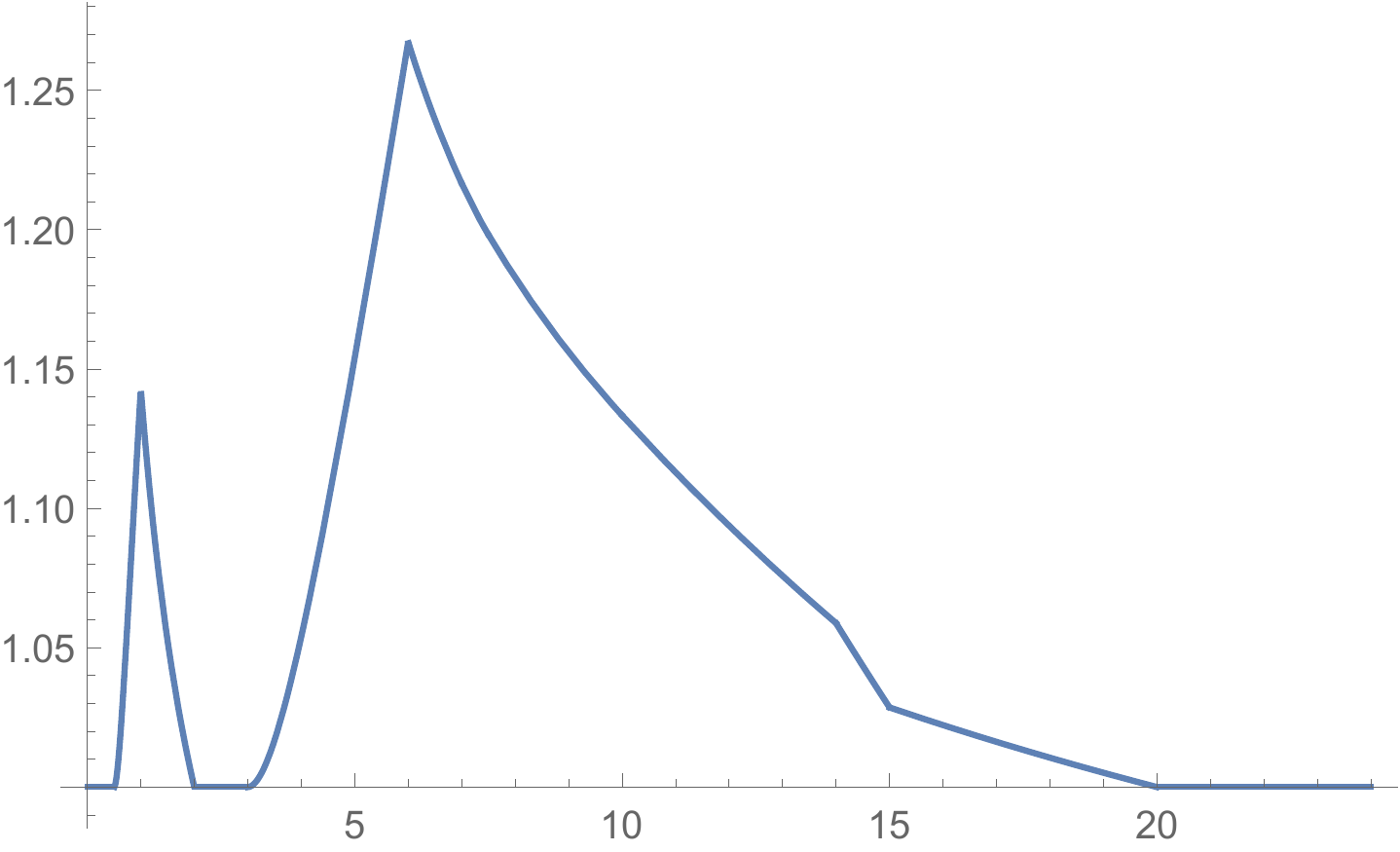}
\label{fi:2-nested-Wheatstone-extended-PoA}
}
\caption{An example where $\PoA$ goes back to $1$ once at intermediate demands}
\label{fi:nested-Wheatstone-extended}

\end{figure}

\cref{fi:2-nested-Wheatstone-extended-graph} shows the version where the network is obtained by nesting with two Wheatstone networks.
\cref{fi:2-nested-Wheatstone-extended-PoA} shows the graph of the corresponding \ac{PoA}.
The \ac{PoA} is equal to $1$ for small demand ($\rate\le 1/2$), then increases and reaches a local maximum, then it decreases back to $1$ and it remains equal to $1$ for the entire interval $[2,3]$ of demand, then reaches its maximum and,  after that, decreases back to $1$, where it remains indefinitely.

Below we list the paths in the network in \cref{fi:2-nested-Wheatstone-extended-graph}:
\begin{align*}
\route_{1}=&~\source\rightarrow\vertex_{1}\rightarrow\sink,\\
\route_{2}=&~\source\rightarrow\vertex_{4}\rightarrow\sink,\\
\route_{3}=&~\source\rightarrow\vertex_{1}\rightarrow\vertex_{2}\rightarrow\vertex_{4}\rightarrow\sink,\\
\route_{4}=&~\source\rightarrow\vertex_{1}\rightarrow\vertex_{3}\rightarrow\vertex_{4}\rightarrow\sink,\\
\route_{5}=&~\source\rightarrow\vertex_{1}\rightarrow\vertex_{2}\rightarrow\vertex_{3}\rightarrow\vertex_{4}\rightarrow\sink.
\end{align*}
The equilibrium flow for $\rate\in [0,+\infty)$ is given explicitly in the following table:

\medskip

\begin{center}

\def\arraystretch{2}
\begin{tabular}{c|c|c|c|c|c|c}
Interval & Cost $\eqcost(\rate)$ & $\route_{1}$ & $\route_{2}$ & $\route_{3}$ & $\route_4$ & $\route_5$ \\
\hline
$\rate\in[0,1)$ & $4\rate$ & 0 & 0 & 0 & 0 & $\rate$  \\
\hline
$\rate\in[1,2)$ & $2+2\rate$ & 0 & 0 & $\rate-1$ & $\rate-1$ & $2-\rate$  \\
\hline
$\rate\in[2,6)$& $\displaystyle 1 +\frac 52 \rate$ & 0 & 0 & $\displaystyle\frac{\rate}{2}$ & $\displaystyle\frac{\rate}{2}$ & 0 \Bstrut \\
\hline
$\rate\in[6,14)$ &$\displaystyle \frac{29}{2}+\frac{\rate}{4}$& $\displaystyle\frac 34 \rate -\frac 92$ & $\displaystyle\frac 34 \rate -\frac 92$ & $\displaystyle \frac 92-\frac{\rate}{4}$ & $\displaystyle \frac 92-\frac{\rate}{4}$ & 0 \Bstrut \\
\hline
$\rate\in[14,15)$ &$18$& $\rate-8$ & $\rate-8$ & $15-\rate$ & $15-\rate$ & $\rate-14$  \\
\hline
$\rate\in[15,20)$ &$\displaystyle 12+\frac 25 \rate$& $\displaystyle \frac 35 \rate-2$ & $\displaystyle \frac 35 \rate-2$ & 0 & 0 & $\displaystyle 4 -\frac{\rate}{5}$ \Bstrut  \\
\hline
$\rate\ge 20$ &$\displaystyle 10+\frac{\rate}{2}$& $\displaystyle\frac{\rate}{2}$ & $\displaystyle\frac{\rate}{2}$ & 0 & 0 & 0  \\

\end{tabular}

\end{center}

\bigskip

As a consequence, the \ac{PoA} is
\begin{equation}
\label{eq:PoA-nested-Braess}
\PoA(\rate)=
\begin{cases}
1 &\text{if }\rate\in[0,\frac 12)\\
\dfrac{8\rate^{2}}{-1+4\rate+4\rate^{2}} &\text{if }\rate\in[\frac 12,1)\\[10pt]

\dfrac{4+4\rate}{2+5\rate} &\text{if }\rate\in[1,2)\\[10pt]

1 &\text{if }\rate\in[2,3)\\[10pt]

\dfrac{4\rate+10\rate^{2}}{-81+58\rate+\rate^2} &\text{if }\rate\in[3,6)\\[10pt]

\dfrac{58\rate+\rate^{2}}{-81+58\rate+\rate^{2}} &\text{if }\rate\in[6,7)\\[10pt]

\dfrac{58\rate+\rate^2}{-130+72\rate} &\text{if }\rate\in[7,\frac{15}{2})\\[10pt]

\dfrac{290\rate+5\rate^2}{-200+240\rate+8\rate^2} &\text{if }\rate\in[\frac{15}{2},10)\\[10pt]

\dfrac{58+\rate}{40+2\rate} &\text{if }\rate\in[10,14)\\[10pt]

\dfrac{36}{20+\rate} &\text{if }\rate\in[14,15)\\[10pt]

\dfrac{120+4\rate}{100+5\rate} &\text{if }\rate\in[15,20)\\[10pt]

1 &\text{if }\rate\ge 20.
\end{cases}
\end{equation}

The case of three nested Wheatstone networks can be treated similarly (see \cref{fi:3-nested-Wheatstone-extended}). Here the \ac{PoA} is $1$ in the intervals $[0,\frac 12]$, $[2,3]$, $[20,30]$ and $[200,+\infty)$.

\begin{figure}[h]
\setcounter{subfigure}{0}
\subfigure[Three nested Wheatstone networks]
{
\begin{tikzpicture}[scale=0.7]
\node[shape=circle,draw=black,line width=1pt] (v0) at (-6,0)  {\small $\source$};
    \node[shape=circle,draw=black,line width=1pt] (v1) at (0,4)  {\small $\vertex_{1}$}; 
   \node[shape=circle,draw=black,line width=1pt] (v2) at (2.6,0)  {\small $\vertex_{2}$}; 
   \node[shape=circle,draw=black,line width=1pt] (v3) at (0,1.33)  {\small $\vertex_{4}$}; 
   \node[shape=circle,draw=black,line width=1pt] (v4) at (0,-1.33)  {\small $\vertex_{3}$}; 
   \node[shape=circle,draw=black,line width=1pt] (v5) at (-2.6,0)  {\small $\vertex_{5}$}; 
   \node[shape=circle,draw=black,line width=1pt] (v6) at (0,-4)  {\small $\vertex_{6}$}; 
   \node[shape=circle,draw=black,line width=1pt] (v7) at (6,0)  {\small $\sink$}; 
   \draw[line width=1pt,->] (v1) to   node[midway,fill=white] {$\load$} (v2);
   \draw[line width=1pt,->] (v1) to   node[midway,fill=white] {$10$} (v5);
   \draw[line width=1pt,->] (v2) to   node[midway,fill=white] {$1$} (v3);
   \draw[line width=1pt,->] (v2) to   node[midway,fill=white] {$\load$} (v4);
   \draw[line width=1pt,->] (v2) to   node[midway,fill=white] {$10$} (v6);
   \draw[line width=1pt,->] (v3) to   node[midway,fill=white] {$\load$} (v5);
   \draw[line width=1pt,->] (v4) to   node[midway,fill=white] {$0$} (v3);
   \draw[line width=1pt,->] (v4) to   node[midway,fill=white] {$1$} (v5);
   \draw[line width=1pt,->] (v5) to   node[midway,fill=white] {$\load$} (v6);
   \draw[line width=1pt,->] (v0) to   node[midway,fill=white] {$\load$} (v1);
    \draw[line width=1pt,->] (v0) to   node[midway,fill=white] {$100$} (v6);
     \draw[line width=1pt,->] (v1) to   node[midway,fill=white] {$100$} (v7);
    \draw[line width=1pt,->] (v6) to   node[midway,fill=white] {$\load$} (v7);
   
\end{tikzpicture}
\label{fi:3-nested-Wheatstone-extended-graph}
}
\hspace{1cm}
\subfigure[Corresponding \ac{PoA}]
{
\includegraphics[width=0.6\textwidth]{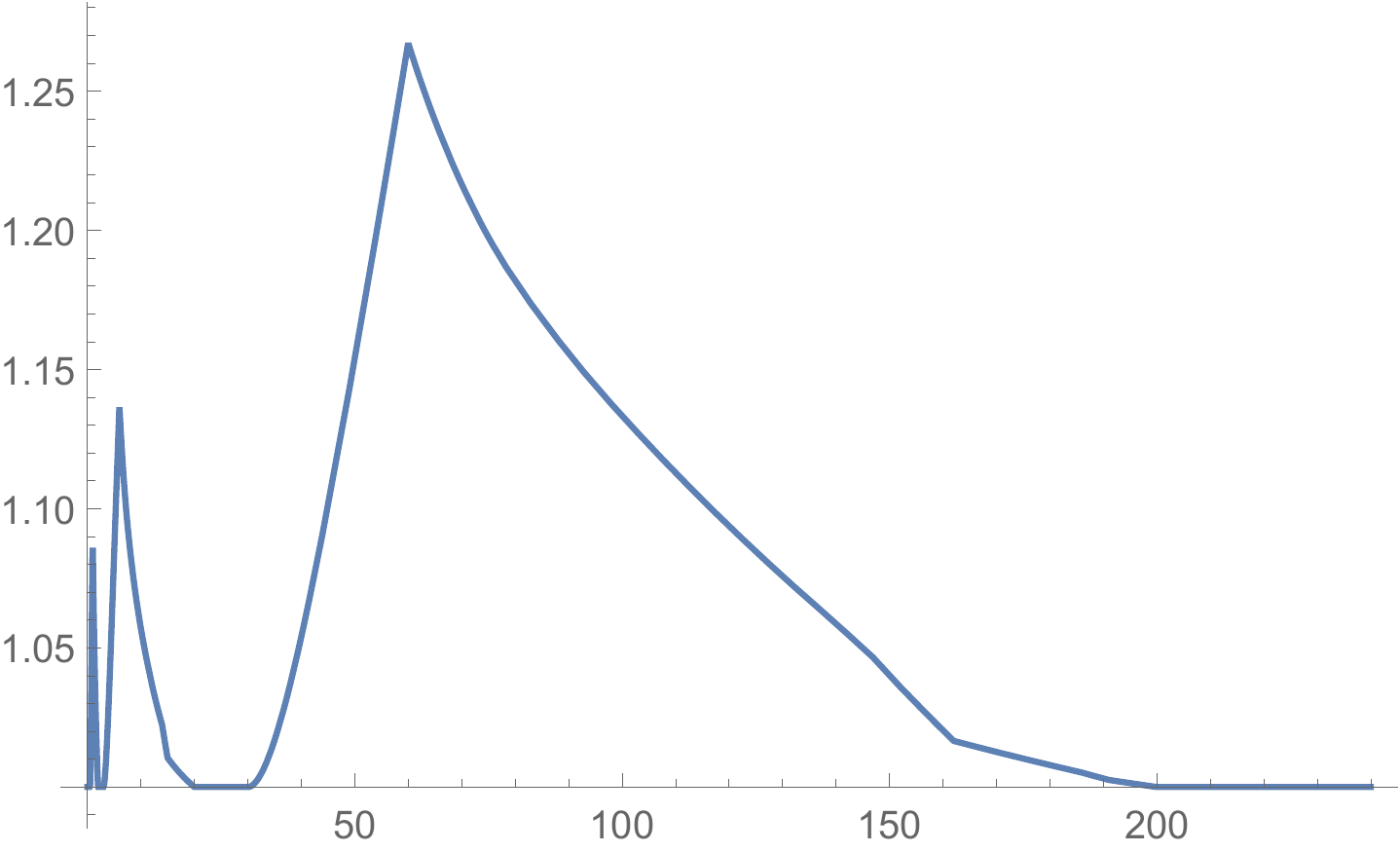}
\label{fi:3-nested-Wheatstone-extended-PoA}
}
\caption{An example where $\PoA$ goes back to $1$ twice at intermediate demands}
\label{fi:3-nested-Wheatstone-extended}
\end{figure}
\end{example}

\begin{example}
This example shows that the result in  \cref{pr:UnimodalityAffine} fails for polynomials, even for the simplest network topology.
Indeed, consider the two-link parallel network with cost functions
\begin{equation}
\label{eq:parallel-costs}
\begin{split}
\cost_{1}(\load_{1})&=\load_{1}, \\
\cost_{2}(\load_{2})&=1+\load_{2}^2.
\end{split}
\end{equation}
The equilibrium and optimum flows can be computed explicitly as:

\medskip

\begin{center}
\begin{tabular}{c|c|c|c|c}
demand & $\eq\load_{1}(\rate)$ & $\eq\load_{2}(\rate)$ & $\opt\load_{1}(\rate)$ & $\opt\load_{2}(\rate)$ \Bstrutsmall\\
\hline
$\rate\in[0,1/2]$ & $\rate$ & 0 & $\rate$ & 0\\
\hline
$\rate\in[1/2,1]$ & $\rate$ & 0 &  $\rate-\dfrac{-1+\sqrt{6\rate-2}}{3}$ & $\dfrac{-1+\sqrt{6\rate-2}}{3}$ \Tstrut \Bstrut\\
\hline
$\rate\in[1,+\infty)$ & $\rate-\dfrac{-1+\sqrt{4\rate-3}}{2}$ & $\dfrac{-1+\sqrt{4\rate-3}}{2} $&  $\rate-\dfrac{-1+\sqrt{6\rate-2}}{3}$ & $\dfrac{-1+\sqrt{6\rate-2}}{3}$ \Tstrut
\end{tabular}
\end{center}

\medskip

\noindent Note that the sole $\activ{\edges}$-breakpoint is at $\rate=1$. 
Moreover we have $\eq\load_{1}(3)=\opt\load_{1}(3)=2$ and $\eq\load_{2}(3)=\opt\load_{2}(3)=1$ so that
the equilibrium and optimal flows coincide, and $\PoA(3)=1$. 
This implies that in the interval $(3,+\infty)$ the \ac{PoA} has a local maximum (see \cref{fi:parallel-PoA}).

\begin{figure}[h]
\begin{center}
\includegraphics[width=0.5\textwidth]{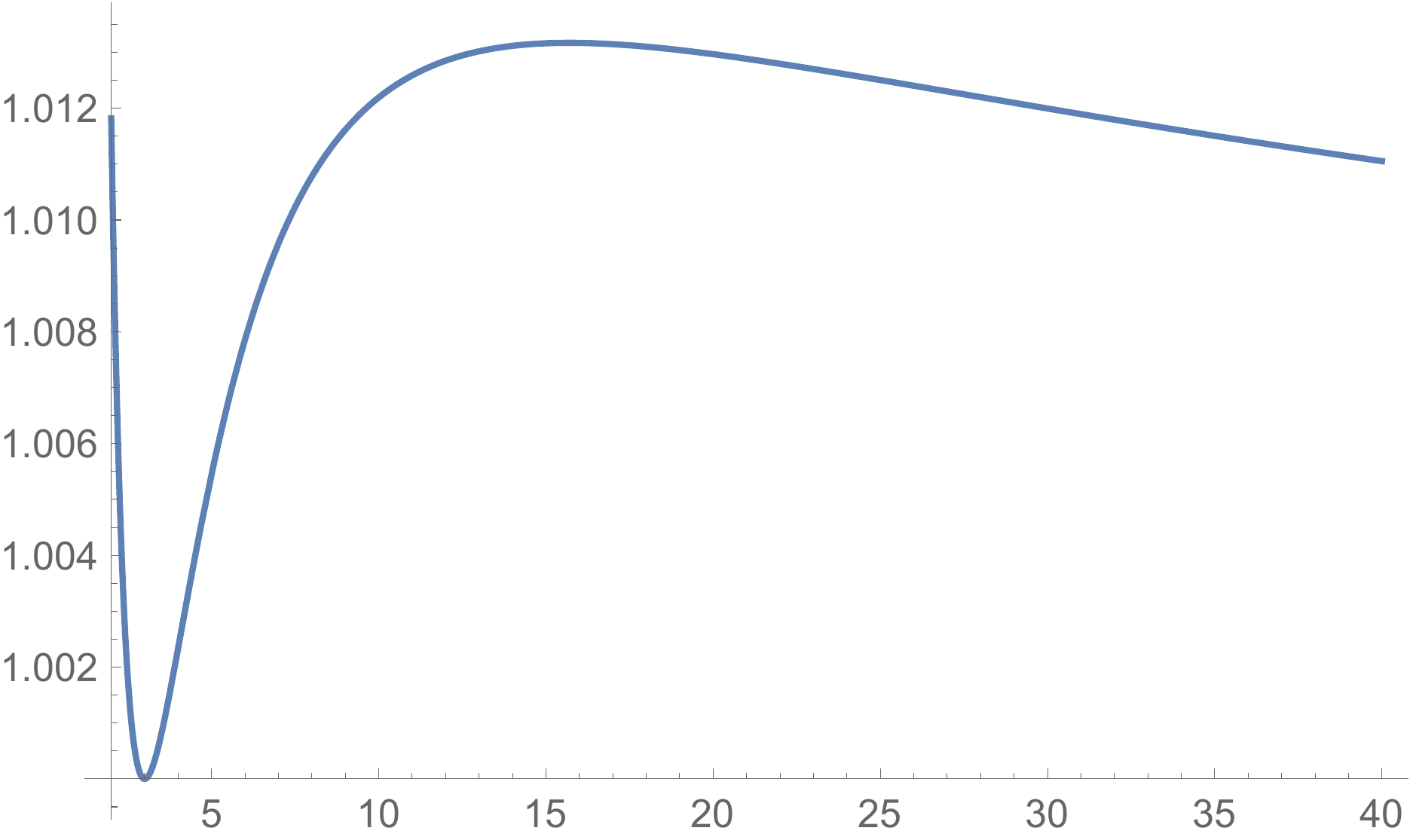}
\caption{Plot of the \ac{PoA} on the interval $[2,40]$ for the parallel network with costs as in \cref{eq:parallel-costs}.}
\label{fi:parallel-PoA}
\end{center}
\end{figure}
\end{example}

When the cost functions are less regular, the set of paths used at equilibrium can have a recurring behavior,
and an active network that is abandoned at some point can be reactivated at larger demands.
In particular we cannot ensure that the number of $\activ{\edges}$-breakpoints is finite. 

\begin{proposition}
\label{pr:RepeatingActiveRegimes}
There exist networks and nondecreasing cost functions $\cost_{\edge}(\argdot)$ such that a given active network $\activ{\edges}(\rate)=\edges_{0}$ can repeat itself over disjoint demand intervals defined by $\activ{\edges}$-breakpoints.
\end{proposition}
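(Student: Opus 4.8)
The statement is an existence (counterexample) claim, so the plan is to exhibit one network and one family of nondecreasing costs explicitly. Two earlier results dictate where to look: \cref{pr:series-parallel-nondecreasing-traffic} shows the active network is monotone on series-parallel graphs, and \cref{pr:affine-flows} shows that for affine costs an active network never recurs; hence any example must be non-series-parallel and genuinely nonlinear. I would therefore use the smallest excluded topology, a single Wheatstone graph: vertices $\source,\vertex_{1},\vertex_{2},\sink$, the four outer edges $(\source,\vertex_{1}),(\vertex_{1},\sink),(\source,\vertex_{2}),(\vertex_{2},\sink)$, and a cross edge $(\vertex_{1},\vertex_{2})$. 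The simple paths are $\route_{1}=\source\to\vertex_{1}\to\sink$, $\route_{2}=\source\to\vertex_{2}\to\sink$ and the Braess path $\route_{3}=\source\to\vertex_{1}\to\vertex_{2}\to\sink$. Since the cross edge lies on $\route_{3}$ alone, the active network $\activ{\edges}(\rate)$ contains it exactly when $\route_{3}$ is a shortest path, so it suffices to arrange that $\route_{3}$ is active on a low-demand window, inactive on an intermediate window, and active again on a high-demand window, while $\route_{1},\route_{2}$ stay active throughout both outer windows; then the full edge set $\edges_{0}$ is the active network on two disjoint intervals separated by an interval carrying only the four outer edges.

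To drive this, I would take symmetric piecewise-linear nondecreasing costs: a constant cost $k$ on the cross edge, identical costs $c(\argdot)$ on $(\source,\vertex_{1})$ and $(\vertex_{2},\sink)$, and identical costs $d(\argdot)$ on $(\source,\vertex_{2})$ and $(\vertex_{1},\sink)$. By symmetry one seeks equilibria with $\flow_{\route_{1}}=\flow_{\route_{2}}$, and the condition governing whether $\route_{3}$ is used collapses, at the on/off boundaries, to the scalar comparison of $\psi(\rate/2)$ with $k$, where $\psi\coloneqq d-c$. The crucial point is that, even though $c$ and $d$ are individually nondecreasing, their difference $\psi$ need not be monotone: choosing the slopes of the two profiles so that $c$ outpaces $d$ on a middle load range and $d$ outpaces $c$ again at higher loads makes $\psi$ dip below $k$ and then rise back above it. This produces two sign changes of $\psi(\rate/2)-k$, hence the on--off--on pattern of $\route_{3}$ and the recurrence of $\edges_{0}$.

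The verification itself I would carry out piecewise in $\rate$, solving in each regime the Wardrop conditions \eqref{eq:Wardrop} to obtain $\flow$, the loads and $\eqcost(\rate)$ as explicit functions of the demand, and tabulating the active networks exactly as in \cref{ex:nested-Wheatstone}. The monotonicity and continuity of $\eqcost(\rate)$ from \cref{pr:V-C1} serve as a consistency check across the transitions. Finally I would read off the two disjoint intervals on which $\activ{\edges}(\rate)=\edges_{0}$ together with the intermediate interval on which it differs, and check that their endpoints are genuine $\activ{\edges}$-breakpoints in the sense of \cref{de:activenetwork}.

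The main obstacle is ensuring that the \emph{whole} set $\edges_{0}$ recurs, rather than merely the single cross edge (which is all one gets, for instance, in the nested example of \cref{ex:nested-Wheatstone}, where only the path $\route_{5}$ reappears). Concretely, the valley of $\psi$ must be placed inside the demand regime where the outer paths $\route_{1},\route_{2}$ already carry flow, so that both outer windows have the identical active set; one also notes that there is an unavoidable one-time $\route_{3}$-only phase near $\rate=0$ (forced by $c$ being nondecreasing, which prevents that phase from recurring), but it lies below the recurrence and is harmless. Keeping the algebra tractable is precisely why I would use symmetric costs with flat and constant-slope segments, which let one position the two zeros of $\psi(\rate/2)-k$ at prescribed demand levels. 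Should a single Wheatstone prove too rigid to satisfy all constraints simultaneously, adding a parallel direct $\source$--$\sink$ link, or nesting two Wheatstone gadgets and recycling the explicit equilibrium table of \cref{ex:nested-Wheatstone}, supplies the extra degrees of freedom needed to force a full active network to repeat.
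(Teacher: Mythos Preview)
Your proposal is correct and is essentially the paper's approach: a single Wheatstone network with symmetric outer costs whose difference $\psi=d-c$ is non-monotone, so that the cross edge (and hence the Braess path) enters and leaves the active network more than once. The paper's concrete instantiation is the special case $c(\load)=\load$, $k=0$, and $d(\argdot)$ a step-like function jumping from a plateau at level $\coeffa$ to a plateau at level $\coeffb$; it then tabulates the equilibrium over six demand regimes and reads off two disjoint intervals on which only the two outer paths are optimal, separated by an interval on which all three are, exactly the on--off--on pattern you describe.
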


\begin{proof}
Consider the network in \cref{fi:non-affine-Wheatstone} with the cost $\cost(\load)$ defined in $[0,\coeffzeta]\cup[\coeffzeta+\coeffeta,+\infty]$ as follows
\begin{equation*}
\cost(\load)=\begin{cases}
\coeffa &\text{ if }\load\le\coeffzeta\\
\coeffb &\text{ if }\load\ge\coeffzeta+\coeffeta,
\end{cases}
\end{equation*}
with $0<\coeffeta<\coeffa<\coeffzeta<\coeffzeta+\coeffeta<\coeffb$, and in the interval $[\coeffzeta,\coeffzeta+\coeffeta]$ we interpolate in any way that makes $\cost(\load)$ continuous and nondecreasing in the whole $[0,+\infty]$. 

\begin{figure}

\begin{center}
\begin{tikzpicture}
   \node[shape=circle,draw=black,line width=1pt,minimum size=0.5cm] (v1) at (-3,0)  { $\source$}; 
   \node[shape=circle,draw=black,line width=1pt,minimum size=0.5cm] (v2) at (0,2)  { $\vertex_{1}$}; 

   \node[shape=circle,draw=black,line width=1pt,minimum size=0.5cm] (v5) at (0,-2)  { $\vertex_{2}$}; 
   \node[shape=circle,draw=black,line width=1pt,minimum size=0.5cm] (v6) at (3,0)  { $\sink$}; 
   
   \draw[line width=1pt,->] (v1) to   node[above=.05cm, midway] {$\load$} (v2);
   \draw[line width=1pt,->] (v1) to   node[below=.05cm, midway] {$\cost(\load)$} (v5);
   \draw[line width=1pt,->] (v2) to   node[left=.05cm, midway] {$0$} (v5);
   
   \draw[line width=1pt,->] (v2) to   node[above=.05cm,midway] {$\cost(\load)$} (v6);

   \draw[line width=1pt,->] (v5) to   node[below=.05cm,midway] {$\load$} (v6);

\end{tikzpicture}
\end{center}

\caption{Wheatsone network with nonaffine costs.
\label{fi:non-affine-Wheatstone} }

\end{figure}
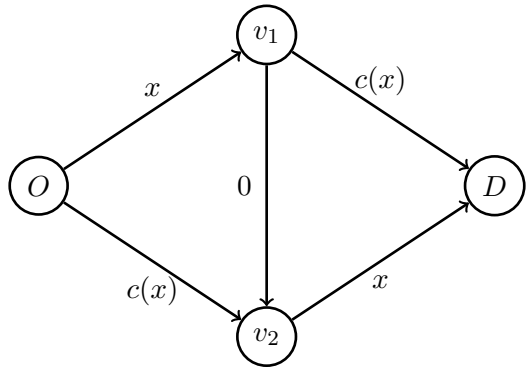

Then we have the following regimes:
\begin{enumerate}[(a)]
\item 
\label{it:regime-0-a}
when $\rate\in [0,\coeffa]$, the equilibrium flow uses only the path $\{\source,\vertex_{1},\vertex_{2},\sink\}$, the load on the two edges $\{\source,\vertex_{2}\}$, and $\{\vertex_{1},\sink\}$ is zero and the equilibrium cost is $\eqcost(\rate)=2\rate$;

\item 
\label{it:regime-a-2a}
when $\rate\in(\coeffa,2\coeffa)$, the equilibrium flow uses all the three paths with the following distribution
\begin{center}
\begin{tabular}{c|c|c|c}
path & $\{\source,\vertex_{1},\vertex_{2},\sink\}$ & $\{\source,\vertex_{1},\sink\}$ & $\{\source,\vertex_{2},\sink\}$\\
\hline
flow & $2\coeffa-\rate$ & $\rate-\coeffa$ & $\rate-\coeffa$ 
\end{tabular}
\end{center}
the load on the two edges  $\{\source,\vertex_{2}\}$, $\{\vertex_{1},\sink\}$ is $\rate-\coeffa<\coeffzeta$, and the equilibrium cost is $\eqcost(\rate)=2\coeffa$;

\item 
\label{it:regime-2a-2g}
when $\rate\in [2\coeffa,2\coeffzeta]$, the equilibrium flow splits equally between the two paths $\{\source,\vertex_{1},\sink\}$, $\{\source,\vertex_{2},\sink\}$, the load on the two edges  $\{\source,\vertex_{2}\}$, $\{\vertex_{1},\sink\}$ is $\rate/2\le\coeffzeta$, and the equilibrium cost is $\eqcost(\rate)=\coeffa+\rate/2$;

\item
\label{it:regime-2g-b+g+e}
the regime on the demand interval  $[2\coeffzeta,\coeffb+\coeffzeta+\coeffeta]$ is complicated to describe, but this is not relevant for our purpose;

\item 
\label{it:regime-b+g+e-2b}
when $\rate\in[\coeffb+\coeffzeta+\coeffeta,2\coeffb)$, the equilibrium  uses all  three paths with the following distribution
\begin{center}
\begin{tabular}{c|c|c|c}
path & $\{\source,\vertex_{1},\vertex_{2},\sink\}$ & $\{\source,\vertex_{1},\sink\}$ & $\{\source,\vertex_{2},\sink\}$\\
\hline
flow & $2\coeffb-\rate$ & $\rate-\coeffb$ & $\rate-\coeffb$ 
\end{tabular}
\end{center}
the load on the two edges  $\{\source,\vertex_{2}\}$, $\{\vertex_{1},\sink\}$ is $\rate-\coeffb\ge\coeffzeta+\coeffeta$, and the equilibrium cost is $\eqcost(\rate)=2\coeffb$;

\item 
\label{it:regime2b-infty}
when $\rate\ge 2\coeffb$, the equilibrium flow splits equally between the two paths $\{\source,\vertex_{1},\sink\}$, $\{\source,\vertex_{2},\sink\}$, and the load on the two edges  $\{\source,\vertex_{2}\}$, $\{\vertex_{1},\sink\}$ is $\rate/2>\coeffzeta+\coeffeta$.
\end{enumerate}

This proves our claim, as in the intervals sub \ref{it:regime-2a-2g} and \ref{it:regime2b-infty}  the equilibrium uses only the two paths $\{\source,\vertex_{1},\sink\}$ and $\{\source,\vertex_{2},\sink\}$, while in the interval sub \ref{it:regime-a-2a} and \ref{it:regime-b+g+e-2b} it uses all three paths.
\end{proof}

\begin{remark}
\label{re:repeat}
Note that in the same way one can construct examples of networks with an infinite number of $\activ{\edges}$-breakpoints for the equilibrium. 
Furthermore, one can make the increasing sequence of such $\activ{\edges}$-breakpoints to be convergent. Indeed, one could chose infinite sequences $(\coeffeta_{\irun})_{\irun\in\N}$, $(\coeffa_{\irun})_{\irun\in\N}$, $(\coeffzeta_{\irun})_{\irun\in\N}$, $(\coeffb_{\irun})_{\irun\in\N}$ with 
\begin{equation}
\label{eq:parameter-constraints}
0<\coeffeta_{\irun}<\coeffa_{\irun}<\coeffzeta_{\irun}<\coeffzeta_{\irun}+\coeffeta_{\irun}<\coeffb_{\irun} 
<\coeffa_{\irun+1}\quad\text{for every }\irun\in\N.
\end{equation}
Then we set the cost function $\cost(\load)$ to be
\begin{equation*}
\cost(\load)=
\begin{cases}
\coeffa_{1} &\text{ if }\load\le\coeffzeta_{1}\\
\coeffb_{\irun} &\text{ if }\coeffzeta_{\irun}+\coeffeta_{\irun}\le\load\le\dfrac{\coeffa_{\irun+1}-\coeffb_{\irun}}2\\
\coeffa_{\irun+1} &\text{ if }\coeffa_{\irun+1}\le\load\le\coeffzeta_{\irun+1},
\end{cases}
\end{equation*}
and in the intervals $[\coeffzeta_{\irun},\coeffzeta_{\irun}+\coeffeta]$, $[(\coeffa_{\irun+1}-\coeffb_{\irun})/2,\coeffa_{\irun+1}]$ we interpolate in any way that makes $\cost(\load)$ continuous and nondecreasing in the whole $[0,+\infty]$. With this choice of $\cost(\load)$, the  situation is similar to the one of \cref{pr:RepeatingActiveRegimes}, repeated infinitely many times. 
Furthermore, since the $\coeffeta_{\irun}$ can be as small as we want, we can choose the sequences $(\coeffa_{\irun})_{\irun\in\N}$, $(\coeffzeta_{\irun})_{\irun\in\N}$, $(\coeffb_{\irun})_{\irun\in\N}$ to be convergent to the same limit.
\end{remark}

\begin{remark}
\label{re:piecewise-affine}
Repetition of an active network can happen for cost functions that are smooth or piecewise affine, as in \citet{KliWar:MORfrth}. 
We know that this cannot happen for affine cost functions, but, so far we have not been able to characterize the class of cost functions for which repetitions of active networks are impossible.
\end{remark}


\subsection*{Acknowledgments}
Marco Scarsini and Valerio Dose are members of INdAM-GNAMPA.
Roberto Cominetti gratefully acknowledges the support of Luiss University during a visit in which this research was initiated, as well as the support of
Proyecto Anillo ANID/PIA/ACT192094.
This research project received partial support from the COST action GAMENET, the INdAM-GNAMPA Project 2020 \emph{Random Walks on Random Games}, and  the Italian MIUR PRIN 2017 Project ALGADIMAR \emph{Algorithms, Games, and Digital Markets}.

The authors thank the reviewers for their careful reading of the paper, for their useful suggestions,  and for bringing to their attention the  article by \citet{KliWar:MORfrth}.


\section{List of symbols}
\label{app:symbols}

\begin{longtable}{p{.13\textwidth} p{.82\textwidth}}

$\coeffa_{\edge}$ & coefficient of the monomial in an affine cost function \\

$\matrA$ & $\epincid^{\top}\diaga\epincid$, defined in \cref{eq:A-beta} \\

$\coeffb_{\edge}$ & constant in an affine cost function \\

$\coeffbprof$ & $(\coeffb_{1},\dots,\coeffb_{\nedges})^{\top}$ \\

$\cost_{\edge}$ & cost of edge $\edge$ \\

$\costprof$ & $(\cost_{1},\dots,\cost_{\nedges})^{\top}$ \\

$\opt{\cost}_{\edge}(\load_{\edge})$ & $\opt{\Cost}_{\edge}'(\load_{\edge})$, defined in \cref{eq:marginal-cost} \\

$\Cost_{\edge}$ & primitive of $\cost_{\edge}$, defined in \cref{eq:Cost} \\

$\opt{\Cost}_{\edge}(\load_{\edge})$ & $\load_{\edge}\, \cost_{\edge}(\load_{\edge})$ \\

$\cost_{\route}$ & cost of path $\route$, defined in \cref{eq:cost-path} \\

$\sumbvec$ & $\epincid^{\top}\coeffbprof$, defined in \cref{eq:A-beta} \\

$\sink$ & destination of the network \\

$\dual_{\rate}$ & dual problem \\

$\edge$ & edge \\

$\edges$ & set of edges \\

$\activ{\edges}(\rate)$ & active network at demand $\rate$, defined in \cref{de:activenetwork} \\

$\activ{\edges}_{0}$ & $\activ{\edges}(\rate_{0})$ \\

$\edges\parens*{\vertex,\vertexalt}$ & set of all edges $\edge\in\edges$ that go from $\vertex$ to $\vertexalt$ \\

$\activ{\edges}_{0}\parens*{\vertex,\vertexalt}$ &  set of all edges $\edge\in\activ{\edges}_{0}$ that go from $\vertex$ to $\vertexalt$ \\

$\flowprof$ & flow profile \\

$\eq{\flowprof}$ & equilibrium flow profile \\

$\opt{\flowprof}$ & optimum flow profile \\

$\flow_{\route}$ & flow of path $\route$ \\

$\flows_{\rate}$ & set of flows of total demand $\rate$, defined in \cref{eq:flows}\\

$\graph$ & graph \\

$\nedges$ & $\card{\edges}$ \\

$\npaths$ & $\card{\routes}$ \\

$\outneigh(\vertex)$ & out-edges of vertex $\vertex$ \\

$\inneigh(\vertex)$ & in-edges of vertex $\vertex$ \\

$\source$ & origin of the network \\

$\route$ & path \\

$\routes$ & set of paths \\

$\prim_{\rate}$ & primal problem \\

$\PoA$ & price of anarchy \\

$\solset$ & solution set \\

$\SC$ & social cost, defined in \cref{eq:SC} \\

$\incrcostedge_{\edge}$ & increment of $\eqcostedge_{\edge}$, defined in the proof of \cref{pr:EqCost_Differentiable} \\

$\eqcostvertex_{\vertex}$ & equilibrium cost of shortest path to $\vertex$, defined in \cref{eq:T-min} \\

$\incrload_{\edge}$ & increment of $\load_{\edge}$, defined in the proof of \cref{pr:EqCost_Differentiable} \\

$\vertices$ & set of vertices \\

$\vertices_{0}$ & set of vertices in the active network \\

$\vinf_{\rate}(\zvar)$ & $\inf_{\flow}\perturb_{\rate}(\flowprof,z)$, defined in \cref{eq:v-m} \\

$\valueW(\rate)$ & solution of the equilibrium minimization problem, defined in \cref{eq:Wardrop-variational} \\

$\opt{\valueW}(\rate)$  & solution of the optimum minimization problem, defined in \cref{eq:opt-min} \\

$\loadprof$ & load profile \\

$\eq{\loadprof}$ & equilibrium load profile \\

$\load_{\edge}$ & load of edge $\edge$ \\

$\loads_{\rate}$ & set of loads of total demand $\rate$ \\

$\epincid$ &  edge-path incidence matrix \\

$\coeffalpha$ & element of $\reals_{+}$, first used in \cref{pr:affine-flows} \\

$\coeffalpha_{\run}$ & element of $\reals_{+}$, first used in \cref{pr:SocialCostAffine} \\

$\coeffbeta$ & element of $\reals_{+}$, first used in \cref{pr:affine-flows} \\

$\coeffbeta_{\run}$ & element of $\reals_{+}$, first used in \cref{pr:SocialCostAffine} \\

$\coeffgamma$ & element of $\reals_{-}$, first used in \cref{pr:UnimodalityAffine} \\ 

$\coeffgamma_{\run}$ & element of $\reals_{-}$, first used in \cref{pr:SocialCostAffine} \\

$\diaga$ & $\diag[(\coeffa_{\edge})_{\edge\in\edges}]$ \\

$\coeffdelta$ & element of $\reals_{+}$, first used in \cref{pr:UnimodalityAffine} \\

$\incrcostvertex_{\vertex}$ & increment of $\eqcostvertex_\vertex$, defined in the proof of \cref{pr:EqCost_Differentiable} \\

$\Delta_{\vertex,\vertexalt}$ & constant travel time from $\vertex$ to $\vertexalt$\\

$\coeffzeta$ & element of $\reals_{+}$, first used in the proof of 
\cref{pr:RepeatingActiveRegimes} \\

$\coeffeta$ & element of $\reals_{+}$, first used in \cref{pr:UnimodalityAffine} \\

$\eqcost$ & equilibrium cost, defined in \cref{eq:Wardrop} \\

$\rate$ & demand \\

$\rate_{0}$ & $\activ{\edges}$-breakpoint for the equilibrium, defined in \cref{de:activenetwork} \\

$\opt{\rate}_{0}$ & $\activ{\edges}$-breakpoint for the optimum \\

$\eqcostedge_{\edge}$ & equilibrium cost of edge $\edge$ \\

$\perturb_{\rate}$ & perturbation function, defined in \cref{eq:perturb} \\

\end{longtable}

\bibliographystyle{apalike}
\bibliography{biblio-games}

\end{document}